\newcommand{\bisl}{\sim^{L}}
\newcommand{\nil}{\mathbf{0}}
     \def\tr#1{\stackrel{#1}{\to}}
     \def\IPOtr#1{\stackrel{#1}{\rightarrowfill_{IPO}}}
     \def\FULLtr#1{\stackrel{#1\ \ }{\rightarrowfill_{SAT}}}
  \newcommand{\proofend}{\mbox{$\Box$}}
\newcommand{\SQUR}[5][]{\ensuremath{{{}_{#1{#2}}}_{\to#1{#4}\smash{{}^{\nearrow}}}^{{}_{\nearrow}#1{#3}\to#1{#5}}}}
\newcommand{\squr}[5][r]{%
  \ifthenelse{\equal{#1}{r}}%
  {\SQUR{#2}{#3}{#4}{#5}}%
  {\ensuremath{\mathreflect{\SQUR[\mathreflect]{#2}{#3}{#4}{#5}}}}%
}
\let\oldrightarrow\to
\renewcommand{\to}[1][]{%
  \ifthenelse{\equal{#1}{}}%
  {\oldrightarrow}%
  {\ensuremath{\xrightarrow{{}_{#1}}}}%
}
\let\oldleftarrow\gets
\renewcommand{\gets}[1][]{%
  \ifthenelse{\equal{#1}{}}%
  {\oldleftarrow}%
  {\ensuremath{\xleftarrow{{}_{#1}}}}%
}
\def \rightarrowfill{\m@th\mathord{\smash-}\mkern-6mu%
  \cleaders\hbox{$\mkern-2mu\mathord{\smash-}\mkern-2mu$}\hfill
  \mkern-6mu\mathord\rightarrow}
\def \Rightarrowfill{\m@th\mathord{\smash-}\mkern-6mu%
  \cleaders\hbox{$\mkern-2mu\mathord{\smash-}\mkern-2mu$}\hfill
  \mkern-6mu\mathord\Rightarrow}
\def \mapstofill{\m@th\mathord{\smash-}\mkern-6mu%
  \cleaders\hbox{$\mkern-2mu\mathord{\smash-}\mkern-2mu$}\hfill
  \mkern-6mu\mathord\longmapsto}
\newcommand{\Cat}[1]{{\bf #1}}
\newcommand{\Rules}[1]{\mathfrak{#1}}
\newcommand{\barb}[1]{\downarrow_{#1}}
\newcommand{\bsbis}{\sim^{BS}}
\newcommand{\bssbis}{\sim^{BSS}}
\newcommand{\ilts}{\textsc{ITS}}
\def\tr#1{\stackrel{#1}{\rightarrowfill}}
\def\IPOtr#1{\stackrel{#1}{\rightarrow_{IPO}}}
\def\FULLtr#1{\stackrel{#1\ \ }{\rightarrow_{SAT}}}
\newcommand{\react}{\rightsquigarrow}
\renewcommand{\>}{\rangle}
\def\notaF#1{ }
\newcommand{\short}[1]{}
\newcommand{\longrew}[1]{\ensuremath{\!\xymatrix{ {} \ar@{=>}[r]^{#1} & }\!}}
\def\tr#1{\stackrel{#1}{\rightarrowfill}}
\def \rightarrowfill{\m@th\mathord{\smash-}\mkern-6mu%
  \cleaders\hbox{$\mkern-2mu\mathord{\smash-}\mkern-2mu$}\hfill
  \mkern-6mu\mathord\rightarrow}
\def \Rightarrowfill{\m@th\mathord{\smash-}\mkern-6mu%
  \cleaders\hbox{$\mkern-2mu\mathord{\smash-}\mkern-2mu$}\hfill
  \mkern-6mu\mathord\Rightarrow}
\def \mapstofill{\m@th\mathord{\smash-}\mkern-6mu%
  \cleaders\hbox{$\mkern-2mu\mathord{\smash-}\mkern-2mu$}\hfill
  \mkern-6mu\mathord\longmapsto}
\newcommand{\cellp}[5]{\xymatrix@C=#5cm{{#1} \ar[r]^{#2} \ar@{}[r]_{#3} & {#4}}}
\newtheorem{definition}{Definition}{}
{}
\newtheorem{lemma}{Lemma}{}
{}
\newtheorem{proposition}{Proposition}{}
{}
{}
\title{On Barbs and Labels in Reactive Systems\thanks{Research
    partly supported by the EU within the FP6-IST IP 16004 \textsc{SEnSOria} (\emph{Software Engineering for
 Service-Oriented Overlay Computers}) and carried out during the first author's tenure of
an ERCIM ``Alain Bensoussan'' Fellowship Programme.}}
\author{Filippo Bonchi
\institute{Centrum voor Wiskunde en
Informatica, Amsterdam, The Netherlands}
\email{Filippo.Bonchi@cwi.nl}
\and
Fabio Gadducci
\institute{Dipartimento di Informatica,
Universit\`a di Pisa, Italy}
\email{fabio@di.unipi.it}
\and
Giacoma Valentina Monreale
\institute{Dipartimento di Informatica,
Universit\`a di Pisa, Italy}
\email{vale@di.unipi.it}
}
\begin{document}

\maketitle

\begin{abstract}
  Reactive systems (RSs) represent a meta-framework aimed at deriving
  behavioral congruences for those computational formalisms whose
  operational semantics is provided by reduction rules. RSs proved a
  flexible specification device, yet so far most of the efforts
  dealing with their behavioural semantics focused on idem pushouts
  (IPOs) and saturated (also known as dynamic) bisimulations. In this
  paper we introduce a novel, intermediate behavioural equivalence:
  \emph{L-bisimilarity}, which is able to recast both its IPO and
  saturated counterparts. The equivalence is parametric with respect
  to a set $L$ of RSs labels, and it is shown that under mild
  conditions on $L$ it is indeed a congruence. Furthermore,
  $L$-bisimilarity can also recast the notion of barbed semantics for
  RSs, proposed by the same authors in a previous paper. In order to
  provide a suitable test-bed, we instantiate our proposal by
  addressing the semantics of (asynchronous) CCS and of the calculus
  of mobile ambients.
\end{abstract}


\section{Introduction}
\label{sec:Intro}

\emph{Reactive systems} (RSs)~\cite{DBLP:conf/concur/LeiferM00} are an
abstract formalism for specifying the dynamics of a computational
device. Indeed, the usual specification technique is based on a
reduction system, comprising a set of possible states of the device
and a relation among them, representing the possible evolutions of the
device.
%
The relation is often given inductively, freely instantiating
relatively few rewriting rules: despite its ease of use, the main
drawback of reduction-based solutions is poor compositionality, since
the dynamic behaviour of arbitrary stand-alone terms can be
interpreted only by inserting them in appropriate contexts, where a
reduction may take place. The theoretical appeal of RSs is their
ability to distill labelled transition systems (LTSs), hence,
behavioural equivalences, for devices specified by a reduction system.

The idea underlying RSs is simple: whenever a device specified by
a term $C[P]$ (i.e., a sub-term $P$ inserted into a unary context
$C[-]$) may evolve to a state $Q$, the associated LTS has a transition
$P \tr{C[-]} Q$ (i.e., the state $P$ evolves into $Q$ with a label
$C[-]$).
%
%
If all contexts are admitted, the resulting semantics is called
saturated, and the standard bisimilarity on the derived LTS is a
congruence.
However, it is unfeasible to check the bisimulation game under all
contexts, and usually it suffices to consider a subset of contexts
that guarantees that the distilled behavioural semantics is a
congruence.  Such a set, the ``minimal'' contexts allowing a reduction
to occur, was identified in~\cite{DBLP:conf/concur/LeiferM00} by the
notion of \emph{relative pushout}: the resulting strong bisimilarity
is a congruence, even if it often does not coincide with the saturated
one.

Several attempts have been made to encode various specification
formalisms (Petri nets~\cite{RobinBGPN,Sassone05}, logic
programming~\cite{LICS2006}, etc.) as RSs, either hoping to recover
the standard observational equivalences, whenever such a behavioural
semantics exists (CCS~\cite{MIL:CAC}, pi-calculus~\cite{Mil:PPCA},
etc.), or trying to distill a meaningful new semantics. The results
are often not fully satisfactory: bisimilarity via minimal contexts is
usually too fine-grained; while saturated semantics are often too
coarse (the standard CCS strong bisimilarity is e.g. strictly included
in the saturated one). As for process calculi, the standard way out of
the empasse it to consider \emph{barbs}~\cite{RobinICALP92} (i.e.,
predicates on the states of a system) and barbed equivalences
(i.e., adding the check of such predicates in the bisimulation game).
The flexibility
of the definition allows for recasting a variety of observational,
bisimulation-based equivalences. Indeed, the methodological
contribution of~\cite{BGMFOSSACS09} is the introduction of suitable
notions of barbed saturated semantics for RSs.

In this paper we move one step further, and we propose a novel
behavioural equivalence for RSs, namely, $L$-bisimulation: a
flexible tool, parametric with respect to a set of minimal labels $L$.
Also in this case the idea is very simple, and it just asymmetrically
refines the standard bisimulation game. If the minimal LTS has
a transition $P \tr{C[-]} Q$, then a bisimilar $P'$ has to react via
a minimal transition $P' \tr{C[-]} Q'$, whenever $C[-] \in L$;
or it must ensure that $C[P']$ may evolve into $Q'$ (thus requiring no
minimality for $C[-]$ with respect to $P'$), otherwise.
The associated bisimilarity is  intermediate between the standard semantics
(i.e., minimal and saturated)
for RSs: indeed, it is able to recover both of them, by simply
varying the set $L$ and exploiting the so-called semi-saturated semantics.
It can be proved that, under mild closure conditions on the set $L$,
$L$-bisimilarity is a congruence; and moreover, it can be shown
that  barbed saturated semantics can be recast, as long as $L$
satisfies suitable barb-capturing properties.

With respect to barbed saturated semantics, $L$-bisimilarity admits a
streamlined definition, where state predicates play no role. It is
thus of simpler verification, and its introduction may have far
reaching consequences over the usability of the RS formalism. However,
as for any newly proposed semantics, its adequacy and ease of use have
to be tested against suitable case studies. We thus consider a recently
introduced, minimal context semantics for mobile ambients (MAs), as distilled
in~\cite{BoGaMo}; as well as two minimally labelled transition systems
for CCS and its asynchronous variant, reminiscent of those proposed in
\cite{bgk:bisimulation-graph-enc}. We show that in those cases, a set
$L$ of minimal labels can be identified, such that $L$-bisimilarity
precisely captures the standard semantics of the calculus at hand.

The paper is organized as follows. Section~2 recalls the basic notions
of RSs, while Section~3 and Section~4 perform the same for MAs and
(asynchronous) CCS, respectively. Section~5 presents the technical
core of the paper: the introduction of $L$-bisimilarity for RSs, the
proof that (under mild conditions on $L$) it is indeed a congruence,
and moreover its correspondence with barbed semantics.  Finally,
Section~6 and Section~7 prove that, suitably varying the set $L$, the
newly defined $L$-bisimilarity captures the standard equivalences for
MAs and for CCS and its asynchronous variant, respectively.


\section{Reactive Systems}\label{sec:Rea}

This section summarizes the main results concerning (the theory of)
reactive systems (RSs)~\cite{DBLP:conf/concur/LeiferM00}.  The
formalism aims at deriving labelled transition systems (LTSs) and
bisimulation congruences for a system specified by a reduction
semantics, and it is centered on the concepts of \emph{term},
\emph{context} and \emph{reduction rule}: contexts are arrows of a
category, terms are arrows having as domain $0$, a special object that
denotes groundness, and reduction rules are pairs of (ground) terms.

\begin{definition}[Reactive System] A \emph{reactive system} $\mathbb{C}$
  consists of
  \begin{enumerate}
    \setlength{\itemsep}{0cm}
  \item a category $\mathbf{C}$;
  \item a distinguished object $0 \in |\mathbf{C}|$;
  \item a composition-reflecting subcategory $\mathbf{D}$ of
    \emph{reactive contexts};
  \item a set of pairs $\Rules{R}\subseteq \bigcup_{I\in |\mathbf{C}|}
    \mathbf{C}(0,I)\times
    \mathbf{C}(0,I)$ of \emph{reduction rules}.
  \end{enumerate}
\end{definition}

Intuitively, reactive contexts are those in which a reduction may
occur. By composition-reflecting we mean that $d'\circ d \in
\mathbf{D}$ implies $d,d'\in \mathbf{D}$.  Note that the rules have to
be ground, i.e., left-hand and right-hand sides have to be terms
without holes and, moreover, with the same codomain.

The reduction relation is generated from the reduction rules by
closing them under all reactive contexts.  Formally, the
\emph{reduction relation} is defined by taking $P \react Q$ if there
is $\langle l,r \rangle \in\Rules{R}$ and $d\in \mathbf{D}$ such that
$P=d\circ l$ and $Q=d \circ r$.

Thus the behaviour of an RS is expressed as an unlabelled transition
system.
In order to obtain a LTS, we can plug a term $P$ into some context $C[-]$ and observe if a reduction occurs. In this case we have that $P\tr{C[-]}$. Categorically speaking, this means that $C[-] \circ P$ matches $d \circ l$ for some rule $ \langle l,r \rangle\in \Rules{R}$ and some reactive context $d$. This situation is formally depicted by diagram (i) in Fig.~\ref{figureRedex}: a commuting diagram like this is called a \emph{redex square}.

\begin{definition}[Saturated Transition System]
  The \emph{saturated transition system} (STS) is defined as
  follows \begin{itemize} \item states: arrows $P:0\rightarrow I$ in
  $\mathbf{C}$, for arbitrary $I$; \item transitions: $P\FULLtr{C[-]}
  Q$ if $C[P] \react Q$.  \end{itemize}
\end{definition}

Note that $C[P]$ stands for $C[-] \circ P$: the same notation is
used in Definitions~\ref{defSB} and~\ref{defSSB} below, in order to
allow for an easier comparison with the process calculi notation, to
be adopted in the following sections.


\begin{definition}[Saturated Bisimulation]\label{defSB}
A symmetric relation $\mathcal R$ is a \emph{saturated
bisimulation} if whenever $P\, \mathcal R\,Q$ then $\forall C[-]$
\begin{itemize}
\item if $C[P]\react P'$ then $C[Q]\react Q'$ and $P'\,\mathcal R\,Q'$.
\end{itemize}
\emph{Saturated bisimilarity} $\sim^S$ is the largest saturated bisimulation.
\end{definition}

It is obvious that $\sim^S$ is a congruence. Indeed, it is the
coarsest symmetric relation satisfying the bisimulation game on
$\react$ that is also a congruence.

Note that STS is often infinite-branching since all contexts
allowing reductions may occur as labels.  Moreover, it has intuitively
redundant
transitions.  For example, consider the term $a.0$ of CCS. We have
both the transitions $a.0 \FULLtr{\overline{a}.0 \mid -} 0|0$ and
$a.0 \FULLtr{P\mid \overline{a}.0 \mid -} P\mid 0\mid 0$, yet $P$
does not ``concur'' to the reduction. We thus need a notion of
``minimal context allowing a reduction'', captured by \emph{idem
pushouts}.

\begin{definition}[RPO, IPO]\label{RPO}
  Let the diagrams in Fig.~\ref{figureRedex} be in a category
  $\mathbf{C}$, and let ($i$) be a commuting diagram.  A
  \emph{candidate} for (i) is any tuple $ \langle I_5,e,f,g \rangle$
  making (ii) commute.
  A \emph{relative pushout (RPO)} is the smallest such candidate,
  i.e., such that for any other candidate $\langle I_6, e',f',g'
  \rangle$ there exists a unique morphism $h:I_5 \rightarrow I_6$
  making (iii) and (iv) commute.
%
%
  A commuting square such as diagram (i) of Fig.~\ref{figureRedex} is
  called \emph{idem pushout (IPO)} if $\langle
  I_4,c,d,id_{I_4}\rangle$ is its RPO.
\end{definition}
%

 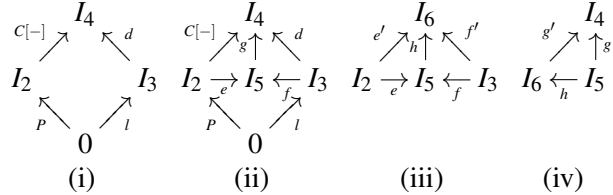
\begin{figure}[t]
 \begin{center}
   \begin{tabular}{@{}c@{}c@{}c@{}c}

   \def\labelstyle{\scriptscriptstyle}
 \xymatrix@C=10pt@R=10pt
     {
       &I_4 & \\
       I_2\ar[ur]^{C[-]}&    &I_3\ar[ul]_d\\
       &0\ar[ul]^P\ar[ur]_l &
     }

     &
  \def\labelstyle{\scriptscriptstyle}
     \xymatrix@C=10pt@R=10pt
     {
       &I_4 & \\       I_2\ar[ur]^{C[-]} \ar[r]_e& I_5 \ar[u]^g   &I_3\ar[l]^f \ar[ul]_d\\
       &0\ar[ul]^P\ar[ur]_l &
     }

     &

     \def\labelstyle{\scriptscriptstyle}
     \xymatrix@C=10pt@R=10pt
     {
       &I_6 & \\
       I_2\ar[ur]^{e'} \ar[r]_e& I_5 \ar[u]^h   &I_3\ar[l]^f \ar[ul]_{f'}
     }
     &
     \def\labelstyle{\scriptscriptstyle}
     \xymatrix@C=10pt@R=10pt
     {
       &I_4\\
       I_6 \ar[ur]^{g'}&I_5\ar[u]_g \ar[l]^h
     }
 \\
 (i) &(ii) & (iii) & (iv)

   \end{tabular}
\caption{Redex Square and RPO}\label{figureRedex}
\end{center}
\end{figure}

Hereafter, we say that an RS \emph{has redex RPOs} (\emph{IPOs})
if every redex square has an RPO (IPO) as candidate.
For a better understanding of these two
notions, we refer the reader to \cite{LICS2006}.
%

\begin{definition}[IPO Transition System]
The \emph{IPO transition system}
  (\ilts) is defined as follows
  \begin{itemize}
  \item states: $P:0\rightarrow I$ in $\mathbf{C}$, for
    arbitrary $I$;
  \item transitions: $P\IPOtr{C[-]} d\circ r$ if
    $d \in \mathbf{D}$,
    $\langle l,r \rangle \in \Rules{R}$,
    and (i) in Fig.~\ref{figureRedex} is an IPO.
  \end{itemize}
\end{definition}

In other words, if inserting $P$ into the context $C[-]$ matches
$d\circ l$, and $C[-]$ is the ``smallest'' such context,
then $P$ evolves to $d\circ r$ with label $C[-]$.

Bisimilarity on $\ilts$ is referred to as \emph{IPO-bisimilarity}
($\sim^{I}$).  Leifer and Milner have shown that if the RS has redex
RPOs, then it is a congruence.

\begin{proposition}\label{theoBiggest}
Let us consider an RS with redex RPOs. Then, $\sim^{I}$ is a congruence.
\end{proposition}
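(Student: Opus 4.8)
The plan is to show that IPO-bisimilarity $\sim^{I}$ is preserved by context insertion: if $P \sim^{I} Q$, then $E[P] \sim^{I} E[Q]$ for every context $E[-]$. To this end, I would exhibit a witnessing bisimulation. The natural candidate is the relation
\[
\mathcal{R} = \{ \langle E[P], E[Q] \rangle \mid P \sim^{I} Q,\ E[-] \text{ a context} \},
\]
and I would prove that $\mathcal{R}$ is an IPO-bisimulation, whence $\mathcal{R} \subseteq {\sim^{I}}$ and congruence follows. By symmetry of $\sim^{I}$ the relation $\mathcal{R}$ is symmetric, so it suffices to match transitions in one direction.

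\medskip

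First I would take an arbitrary IPO-transition $E[P] \IPOtr{C[-]} d \circ r$, which by definition means that the outer square built from $C[-] \circ E[-] \circ P$ matching $d \circ l$ (for some rule $\langle l, r\rangle \in \Rules{R}$ and reactive context $d \in \mathbf{D}$) is an IPO. The key move is to decompose this IPO square along the context $E[-]$: I would factor the reaction on $E[P]$ into a reaction ``seen by $P$'' together with the residual context. Concretely, using the redex-RPO hypothesis, I would construct the RPO for the inner redex square determined by $P$ and the combined context $C[-]\circ E[-]$; this produces a \emph{minimal} context $C'[-]$ under which $P$ already reacts, yielding an IPO-transition $P \IPOtr{C'[-]} d' \circ r$, together with a residual reactive context relating $C'[-]$ and $E[-]$ to the original label $C[-]$.

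\medskip

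Since $P \sim^{I} Q$, this minimal transition $P \IPOtr{C'[-]} d' \circ r$ is matched by some $Q \IPOtr{C'[-]} d'' \circ r'$ with $d' \circ r \sim^{I} d'' \circ r'$. I would then \emph{recompose}: pasting the residual context back onto $Q$'s transition yields a transition $E[Q] \IPOtr{C[-]} \hat{d} \circ r'$ whose target has the form $E'[d'' \circ r']$ for the appropriate residual context $E'[-]$. Because the targets of the two minimal transitions are related by $\sim^{I}$, and $\mathcal{R}$ closes $\sim^{I}$ under all contexts, the two recomposed targets lie again in $\mathcal{R}$, completing the bisimulation game.

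\medskip

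The main obstacle is the IPO \emph{decomposition/recomposition} step: I must verify that factoring the outer IPO through $E[-]$ genuinely produces an inner IPO (so that the matched transition on $Q$ is a legitimate IPO-transition), and conversely that recomposing a minimal transition with the residual context yields an IPO again. This is precisely where the redex-RPO assumption is essential, as it is the hypothesis that guarantees RPOs exist for the relevant squares and underwrites the standard ``pasting/splitting of IPOs'' lemmas of Leifer and Milner. These categorical lemmas — that IPOs compose and decompose along the shared context — are the technical heart of the argument, and once they are in place the matching of transitions and the congruence conclusion follow routinely.
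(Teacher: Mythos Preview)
Your proposal is correct and follows the standard Leifer--Milner argument. The paper does not actually supply its own proof of this proposition---it simply cites Leifer and Milner---but the proof it does give for the more general result on $L$-bisimilarity (Proposition~\ref{prop:BISLisacongruence}) is precisely the argument you outline: form $\mathcal{R}=\{(C[P],C[Q])\mid P\sim Q\}$, decompose an outer IPO square through the context via the redex-RPO hypothesis to obtain an inner IPO transition from $P$, match it on $Q$, and recompose using IPO pasting, appealing explicitly to the IPO composition/decomposition lemmas (Propositions~2.1 and~2.2 of \cite{DBLP:conf/concur/LeiferM00}) that you identify as the technical core.
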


Clearly, $\sim^I\subseteq \sim^S$. In \cite{Bonchithesis} the first
author shows that this inclusion is strict for many formalisms. In
particular, it turns out that in some interesting cases $\sim^I$ is
too strict, while $\sim^S$ is too coarse. This fact is the reason for
introducing \emph{barbed bisimilarities} \cite{RobinICALP92}. Barbs
are predicates (representing some basic observations) on the states of
a system. For instance, in \cite{RobinICALP92} the authors use for CCS
barbs $\barb{a}$ and $\barb{\bar{a}}$ representing the ability of a
process to perform an input, respectively an output, on channel $a$.

In the following we fix a family $O$ of barbs, and we write
$P\barb{o}$ if $P$ satisfies $o \in O$.

\begin{definition}[Barbed Saturated Bisimulation]\label{def:BSbis}
A symmetric relation $\mathcal R$ is a \emph{barbed saturated
bisimulation} if whenever $P\, \mathcal R\,Q$ then $\forall C[-]$
\begin{itemize}
\item if $C[P]\barb{o}$ then $C[Q]\barb{o}$;
\item if $C[P]\react P'$ then $C[Q]\react Q'$ and $P'\, \mathcal R\,Q'$.
\end{itemize}
\emph{Barbed saturated bisimilarity} $\bsbis$ is the largest barbed
saturated bisimulation.
\end{definition}

It is easy to see that $\bsbis$ is the largest barbed bisimulation
that is also a congruence.

\subsection{An Efficient Characterization of (Barbed) Saturated Bisimilarity}

Since the definition of saturated bisimulation involves a
quantification over all possible contexts, it is usually hard to
(automatically) prove the equivalence of two systems. For this
reason, the first author, with K\"onig and Montanari, introduced
\emph{semi-saturated bisimilarity} \cite{LICS2006}.

\begin{definition}[Semi-Saturated Bisimulation]\label{defSSB}
A symmetric relation $\mathcal R$ is a \emph{semi-saturated
  bisimulation} if whenever $P\,\mathcal R\,Q$ then
\begin{itemize}
\item if $P\IPOtr{C[-]} P'$ then $C[Q]\react Q'$ and $P'\,\mathcal R\,Q'$.
\end{itemize}
\emph{Semi-saturated bisimilarity} $\sim^{SS}$ is the largest barbed
semi-saturated bisimulation.
\end{definition}

\begin{proposition}
  \label{theoSemiSatisSat}
Let us consider an RS with redex IPOs. Then,
 $\sim^{SS}=\sim^{S}$.
\end{proposition}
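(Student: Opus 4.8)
The plan is to establish the two inclusions $\sim^{S}\subseteq\sim^{SS}$ and $\sim^{SS}\subseteq\sim^{S}$ separately, the former being routine and the latter resting on a factorisation of reactions through IPO transitions. The technical core is the observation that, in an RS with redex IPOs, every reaction arises from an IPO transition by post-composition with a reactive context. Indeed, suppose $C[P]\react P'$: then $C[P]=d\o l$ and $P'=d\o r$ for some $\langle l,r\rangle\in\Rules{R}$ and $d\in\mathbf{D}$, so that $P$, $l$, $C[-]$ and $d$ form a redex square. By hypothesis this square admits an IPO candidate $\langle I_5,e,f,g\rangle$, i.e. a tuple with $g\o e=C[-]$, $g\o f=d$ and whose inner square $e\o P=f\o l$ is an IPO. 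Since $d=g\o f\in\mathbf{D}$ and $\mathbf{D}$ is composition-reflecting, both $f\in\mathbf{D}$ and $g\in\mathbf{D}$; hence $P\IPOtr{e} f\o r$ is a legal IPO transition. Writing $P_1=f\o r$ we obtain $C[-]=g\o e$ and $P'=d\o r=g\o(f\o r)=g\o P_1$ with $g\in\mathbf{D}$, so the saturated transition splits into the minimal IPO contribution $P\IPOtr{e}P_1$ and the residual reactive context $g$.

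I would first dispatch $\sim^{S}\subseteq\sim^{SS}$ by checking that $\sim^{S}$ is a semi-saturated bisimulation. If $P\sim^{S}Q$ and $P\IPOtr{C[-]}P'$, then the underlying redex square commutes, whence $C[P]=C[-]\o P\react P'$; the saturated game for $\sim^{S}$ then yields $C[Q]\react Q'$ with $P'\sim^{S}Q'$, which is exactly what the semi-saturated game demands.

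For the converse $\sim^{SS}\subseteq\sim^{S}$ I would not argue on $\sim^{SS}$ directly, since a matched reaction lands in a state $g\o P_1$ sitting under a residual context; instead I would close $\sim^{SS}$ under reactive contexts and set $\mathcal R=\{\langle e\o P,\,e\o Q\rangle \mid P\sim^{SS}Q,\ e\in\mathbf{D}\}$, which is symmetric and contains $\sim^{SS}$ (take $e$ an identity). To see that $\mathcal R$ is a saturated bisimulation, take $\langle e\o P,\,e\o Q\rangle$ with $P\sim^{SS}Q$ and assume $C[e\o P]\react P'$. Setting $D[-]=C[-]\o e$, so that $C[e\o P]=D[-]\o P$, the factorisation above gives an IPO transition $P\IPOtr{C_1[-]}P_1$ and a reactive $h$ with $D[-]=h\o C_1[-]$ and $P'=h\o P_1$. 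The semi-saturated game for $P\sim^{SS}Q$ supplies $C_1[Q]\react Q_1$ with $P_1\sim^{SS}Q_1$, say $C_1[Q]=d'\o l'$ and $Q_1=d'\o r'$ with $d'\in\mathbf{D}$. Then $C[e\o Q]=D[-]\o Q=h\o C_1[Q]=(h\o d')\o l'$ and, as $h\o d'\in\mathbf{D}$, we get $C[e\o Q]\react (h\o d')\o r'=h\o Q_1$; finally $h\o P_1\mathrel{\mathcal R}h\o Q_1$ since $P_1\sim^{SS}Q_1$ and $h\in\mathbf{D}$. Hence $\mathcal R$ is a saturated bisimulation, so $\mathcal R\subseteq\sim^{S}$ and a fortiori $\sim^{SS}\subseteq\sim^{S}$.

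The main obstacle is the factorisation of the first paragraph: once a reaction under an arbitrary context is decomposed into an IPO transition and a residual reactive context, everything else is bookkeeping. The delicate point is the repeated appeal to composition-reflection of $\mathbf{D}$, used first to license the IPO transition inside the factorisation, and then to close the matching reaction $C[e\o Q]\react h\o Q_1$ under the residual context $h$.
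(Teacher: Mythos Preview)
The paper does not actually prove this proposition: it is stated as a known result, imported from the cited work on semi-saturated bisimilarity, and no argument is given in the text. So there is nothing to compare your proof against here.

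That said, your argument is correct and is essentially the standard one. The key ingredients---the factorisation of an arbitrary reaction $C[P]\react P'$ through an IPO transition plus a residual reactive context (using the redex-IPO hypothesis together with composition-reflection of $\mathbf{D}$), and the closure of $\sim^{SS}$ under reactive contexts via the auxiliary relation $\mathcal R$---are exactly what the original proof uses. Your observation that one cannot argue directly on $\sim^{SS}$ (because the matched state $h\o Q_1$ need not be $\sim^{SS}$-related to $h\o P_1$ without first knowing congruence) is the right diagnosis, and defining $\mathcal R$ as the reactive-context closure is the standard remedy. The bookkeeping with $D[-]=C[-]\o e$ and the explicit unfolding of $C_1[Q]\react Q_1$ to push the reaction under $h$ are all sound.
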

Reasoning on $\sim^{SS}$ is easier than on $\sim^{S}$ because
instead of looking at the reductions in all contexts,
only IPO transitions are considered.

%
%
%
%
%

In \cite{BGMFOSSACS09}, the authors extended this technique to barbed
saturated bisimilarity.

\begin{definition}[Barbed Semi-Saturated Bisimulation]\label{def:BSSbis}
  A symmetric relation $\mathcal R$ is a \emph{barbed semi-saturated
  bisimulation} if whenever $P\,\mathcal R\,Q$ then
\begin{itemize}
\item $\forall C[-]$, if $C[P]\barb{o}$ then $C[Q]\barb{o}$;
\item if $P\IPOtr{C[-]} P'$ then $C[Q]\react Q'$ and $P'\,\mathcal R\,Q'$.
\end{itemize}
\emph{Barbed semi-saturated bisimilarity} $\bssbis$ is the largest
barbed semi-saturated bisimulation.
\end{definition}


\begin{proposition} \label{PropBSS=BS}
Let us consider an RS with redex IPOs. Then,
 $\bssbis=\bsbis$.
\end{proposition}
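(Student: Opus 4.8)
The statement is the exact barbed analogue of Proposition~\ref{theoSemiSatisSat} ($\sim^{SS}=\sim^S$), so the plan is to replay that argument and merely carry the barb clause along. The first thing to notice is that the barb requirement is \emph{literally identical} in Definitions~\ref{def:BSbis} and~\ref{def:BSSbis}: both demand that, for every context $C[-]$, $C[P]\barb{o}$ imply $C[Q]\barb{o}$. Hence the real content of the proof concerns only the dynamic clause, and the barb clause will travel for free, because it is a universal quantification over contexts that is stable under precomposition by a further context. I would prove the two inclusions separately.

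For the easy direction $\bsbis\subseteq\bssbis$, I would check that $\bsbis$ is itself a barbed semi-saturated bisimulation. Symmetry and the barb clause are immediate. For the dynamic clause, if $P\bsbis Q$ and $P\IPOtr{C[-]} P'$, then by definition of the IPO transition system $C[P]=d\circ l\react d\circ r=P'$; applying the saturated clause of $\bsbis$ at the very context $C[-]$ yields $C[Q]\react Q'$ with $P'\bsbis Q'$, which is exactly what a barbed semi-saturated bisimulation requires. Since $\bssbis$ is the largest such relation, $\bsbis\subseteq\bssbis$ follows.

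The substantial inclusion is $\bssbis\subseteq\bsbis$. Rather than proving $\bssbis$ a congruence in advance, I would take as candidate the closure of $\bssbis$ under reactive contexts,
\[
\mathcal{R}=\{\langle C[P],C[Q]\rangle \mid P\bssbis Q,\ C[-]\in\mathbf{D}\},
\]
which is symmetric and contains $\bssbis$ (take $C[-]=id\in\mathbf{D}$), and then show that $\mathcal{R}$ is a barbed saturated bisimulation; since $\bsbis$ is the largest such relation, this gives $\bssbis\subseteq\mathcal{R}\subseteq\bsbis$. The barb clause for $\langle C[P],C[Q]\rangle$ follows from the barb clause of $\bssbis$ instantiated at the composite context $D[-]\circ C[-]$. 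For the reduction clause, suppose $D[C[P]]\react P'$, i.e.\ $(D\circ C)[P]=d\circ l$ and $P'=d\circ r$ for some $\langle l,r\rangle\in\Rules{R}$ and $d\in\mathbf{D}$. This is a redex square, and here the hypothesis that the RS has redex IPOs enters: the square factors through an IPO, yielding a minimal label $E[-]$, a reactive context $d'\in\mathbf{D}$, and a context $F[-]$ with $P\IPOtr{E[-]} d'\circ r$, together with $D\circ C=F\circ E$ and $d=F\circ d'$; since $d=F\circ d'\in\mathbf{D}$ and $\mathbf{D}$ is composition-reflecting, $F\in\mathbf{D}$. Writing $P''=d'\circ r$ we get $P'=F[P'']$. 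Applying the semi-saturated clause of $\bssbis$ to $P\IPOtr{E[-]} P''$ gives $E[Q]\react Q''$ with $P''\bssbis Q''$; embedding into the reactive context $F$ (reactive contexts preserve reductions) yields $D[C[Q]]=F[E[Q]]\react F[Q'']=:Q'$, and $\langle P',Q'\rangle=\langle F[P''],F[Q'']\rangle\in\mathcal{R}$ precisely because $F\in\mathbf{D}$ and $P''\bssbis Q''$.

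The main obstacle is the IPO-factorization step: turning an arbitrary reduction $(D\circ C)[P]\react P'$ into a minimal IPO transition of $P$ followed by insertion into a reactive context. This is exactly the decomposition guaranteed by the redex-IPO hypothesis, the same lemma underlying Proposition~\ref{theoSemiSatisSat}; the only delicate bookkeeping is that the residual context $F$ must land in $\mathbf{D}$, which is secured by composition-reflectiveness. Choosing the reactive-context closure $\mathcal{R}$ as the candidate is what lets the argument sidestep establishing the congruence of $\bssbis$ beforehand, so no genuinely new difficulty arises beyond the unbarbed case.
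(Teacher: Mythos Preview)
The paper does not actually include a proof of this proposition: it is stated as a result imported from the authors' earlier work~\cite{BGMFOSSACS09}, just as Proposition~\ref{theoSemiSatisSat} is imported from~\cite{LICS2006}. So there is no in-paper proof to compare against.

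That said, your argument is correct and is precisely the expected one. The easy inclusion $\bsbis\subseteq\bssbis$ is immediate, since every IPO transition $P\IPOtr{C[-]}P'$ yields in particular a reduction $C[P]\react P'$. For the converse, your choice of candidate relation $\mathcal{R}=\{(C[P],C[Q])\mid P\bssbis Q,\ C[-]\in\mathbf D\}$ is the right one: the barb clause transfers verbatim because it is already universally quantified over contexts, and the dynamic clause is handled exactly as in the unbarbed case by factoring the redex square $(D\circ C)\circ P=d\circ l$ through an IPO candidate (this is what ``redex IPOs'' supplies), using composition-reflection of $\mathbf D$ to conclude that the residual context $F$ is reactive, and then closing under $F$. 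The only point worth making explicit is that $\mathbf D$, being a subcategory, contains identities (so $\bssbis\subseteq\mathcal R$) and is closed under composition (so $F[E[Q]]\react F[Q'']$ is a legitimate reduction). With those two remarks added, the proof is complete and matches the standard argument underlying the cited result.
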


Also in this case, it is more convenient to work with $\bssbis$
instead of $\bsbis$. Even if barbs are still quantified over all
contexts, for many formalisms (as for MAs) it is actually enough to
check if $P\barb{o}$ implies $Q\barb{o}$, since this condition
implies that $\forall C[-]$, if $C[P]\barb{o}$ then $C[Q]\barb{o}$.
Barbs satisfying this property are called \emph{contextual} barbs.

\begin{definition}[Contextual Barbs] \label{contBarb}
A barb $o$ is a \emph{contextual barb} if whenever
$P\barb{o}$ implies $Q\barb{o}$ then $\forall C[-]$, $C[P]\barb{o}$
implies $C[Q]\barb{o}$.
\end{definition}


\section{Mobile Ambients} \label{sec:MA}

In this section we first introduce the finite, communication-free fragment
of mobile ambients (MAs)~\cite{CG:MAmb} and its reduction semantics.
Then, we recall the IPO transition system for MAs presented in \cite{BoGaMo}.

Fig.~\ref{syntax} shows the syntax of the calculus. We assume a set
$\mathcal N$ of \emph{names} ranged over by $m, n, u, \ldots$.
Besides the standard constructors, we include a set $\{X, Y, \ldots\}$
of \emph{process variables} and a set $\{x, y, \dots\}$ of \emph{name
  variables}.
We let $P,Q,R, \ldots$ range over the set of \emph{pure} processes,
containing neither process nor name variables; while $P_\epsilon,
Q_\epsilon, R_\epsilon, \ldots$ range over the set of
\emph{well-formed} processes, i.e., such that no process or ambient
variable occurs twice.

Intuitively, an impure process such as $x[P] | X$ represents an
underspecified system, where either the process $X$ or the name
of the ambient $x[-]$ can be further instantiated. These extended
processes are needed later for the presentation of the LTS.
We use the standard definitions for the set of free names of a pure
process $P$, denoted by $fn (P)$, and for $\alpha$-convertibility,
with respect to the restriction operators $(\nu n)$. We moreover
assume that $fn(X)=\emptyset$ and $fn(x[P])= fn(P)$.
We also consider a family of \emph{substitutions}, which may replace a
process/name variable with a pure process/name, respectively.
Substitutions avoid name capture: for a pure process $P$, the
expression
$(\nu n)(\nu m)(X | x[\nil])\lbrace ^m/_x , ^{n[P]}/_X \rbrace$
corresponds to the pure process
$(\nu p)(\nu q)(n[P] | m[\nil])$,
for names $p, q \not \in \{m\} \cup fn(n[P])$.

The semantics of the calculus
exploits a \emph{structural congruence}, denoted by $\equiv$, which is
the least equivalence on pure processes that satisfies the axioms in
Fig.~\ref{cong}.
%
The \emph{reduction relation}, denoted by $\react$,
describes the evolution of pure processes.
It is the smallest relation closed under the congruence $\equiv$ and
inductively generated by the set of axioms and inference rules in Fig. \ref{reduction}.

As already said, a \emph{barb} $o$ is a predicate over the states of a system,
with $P\barb{o}$ denoting that $P$ satisfies $o$.  In MAs, $P\barb{n}$
denotes the presence at top-level of an unrestricted ambient
$n$. Formally, for a pure process $P$, $P\barb{n}$ if $P \equiv (\nu
A)(n[Q]|R)$ and $n \not \in A$, for processes $Q$ and $R$ and a
set of restricted names $A$.

\begin{figure}[!t]
\center
\begin{tabular}{p{7.3 cm} p{5 cm}}
\hline \\
$P::= \nil , n[P] , M.P , (\nu n)P , P_1 | P_2 , X , x[P]$ & $M::= in$ $n , out$ $n , open$ $n$ \bigskip\\
\hline
\end{tabular}
\caption{(Extended) Syntax of mobile ambients.}
\label{syntax}
\vspace{3mm}
\begin{tabular}{p{6 cm} p{6 cm}}
\hline
if $P \equiv Q$ then $P|R \equiv Q|R$ & $P|\nil \equiv P$\\
if $P \equiv Q$ then $(\nu n)P \equiv (\nu n)Q$ & $(\nu n)(\nu m)P \equiv (\nu m)(\nu n)P$\\
if $P \equiv Q$ then $n[P] \equiv n[Q]$ & $(\nu n)(P|Q) \equiv P|(\nu n)Q$ \hspace{3mm} if $n \notin fn(P)$\\
if $P \equiv Q$ then $M.P \equiv M.Q$ & $(\nu n) m[P] \equiv m[(\nu n)P]$ \hspace{3mm} if $n \ne m$\\
$P|Q \equiv Q|P$ & $(\nu n) M.P \equiv M.(\nu n)P$ \hspace{4mm} if $n \notin fn(M)$\\
$(P|Q)|R \equiv P|(Q|R)$ & $(\nu n) P \equiv (\nu m)(P\lbrace ^m/_n \rbrace)$ \hspace{.75mm}if $m \notin fn(P)$\\
\hline
\end{tabular}
\caption{Structural congruence.}
\label{cong}
\vspace{3mm}
\begin{tabular}{p{6 cm} p{6 cm}}
\hline
$n[in \hspace{1 mm} m.P|Q]|m[R] \react m[n[P|Q]|R]$ & if  $P \react Q$ then  $(\nu n)P \react (\nu n)Q$\\
$m[n[out \hspace{1 mm} m.P|Q]|R]\react n[P|Q]|m[R]$ & if  $P \react Q$  then  $n[P] \react n[Q]$ \\
$open \hspace{1 mm} n.P|n[Q] \react P|Q$ & if  $P \react Q$  then  $P|R \react Q|R$\\
\hline
\end{tabular}
\caption{Reduction relation on pure processes.}
\label{reduction}
\end{figure}

\begin{definition} [Reduction Barbed Congruences~\cite{PawelAmbient}]
\label{StrongCong}
\emph{Reduction barbed congruence} $\sim^{MA}$ is the largest symmetric relation $\mathcal R$ such that whenever
$P \, \mathcal R \, Q$ then
\begin{itemize}
\item if $P\barb{n}$ then $Q\barb{n}$;
\item if $P \react P'$ then $Q \react Q'$ and $P' \, \mathcal R \, Q'$;
\item $\forall C[-], C[P] \, \mathcal R \, C[Q]$.
\end{itemize}
\end{definition}

A labelled characterization of the reduction barbed congruence over
MAs processes was first presented by Rathke and Soboci\'nski in
\cite{PawelAmbient}, and then by the authors of this paper in
\cite{BGMFOSSACS09}.  In the latter we exploit the notion of barbed
saturated bisimilarity over RSs that is able to capture the
behavioural semantics for MAs defined above.  Moreover, we give an
efficient characterization of such a semantics through the IPO
transition system presented in the next section, and the
semi-saturated game.

\paragraph{An ITS for Mobile Ambients.}
Here we present the $\ilts$ $M$ for MAs proposed in
\cite{BoGaMo}. The inference rules describing this LTS
are obtained
from an analysis of a LTS over (processes as) graphs, derived by the
borrowed context mechanism \cite{EK:06}, which is an instance of the
theory of RSs~\cite{graphslics}.  The labels of the
transitions are unary contexts, i.e., terms of the extended syntax
with a hole $-$. Note that they are minimal contexts, that is, they
represent the exact amount of context needed by a system to react.  We
denote them by $C_\epsilon[-]$.  The formal definition of the LTS is
presented in Fig.~\ref{TableLTSProc2}.

The rule \textsc{Tau} represents the $\tau$-actions
modeling internal computations.  Notice that the labels of the
transitions are identity contexts composed of just a hole $-$, while the
resulting states are pure processes.

The other rules in Fig.~\ref{TableLTSProc2} model the interactions of
a process with its environment. Note that both labels and resulting
states contain process and name variables.  We define the LTS
$M_I$ for processes over the standard syntax of MAs by
instantiating all the variables of the labels and of the resulting
states.

\begin{definition}
\label{instance}
Let $P,Q$ be pure processes and let $C[-]$ be a pure context.  Then,
we have that $P \tr{C[-]}_{M_I} Q$ if there exists a
transition $P \tr{C_\epsilon[-]}_{M} Q_\epsilon$ and a
substitution $\sigma$ such that $Q_\epsilon \sigma \equiv Q$ and
$C_\epsilon[-] \sigma = C[-]$.
\end{definition}

In the above definition recall that substitutions replace process
variables by pure processes and that they do not capture bound names.

The rule \textsc{Open} models the opening of an ambient provided by
the environment. In particular, it enables a process $P$ with a
capability $open \hspace{1mm} n$ at top level, for $n \in fn(P)$, to
interact with a context providing an ambient $n$ containing some
process $X_1$. Note that the label $-|n[X_1]$ of the rule represents
the minimal context needed by the process $P$
for reacting.  The resulting state is the process over the extended
syntax $(\nu A)(P_1|X_1|P_2)$, where $X_1$ represents a process
provided by the environment.  Note that the instantiation of the
process variable $X_1$ with a process containing a free name that
belongs to the bound names in $A$ is possible only $\alpha$-converting
the resulting process $(\nu A)(P_1|X_1|P_2)$ into a process that does
not contain that name among its bound names at top level.

The rule \textsc{CoOpen} instead models an environment that opens an
ambient of the process.  The rule \textsc{InAmb} enables an ambient of
the process to migrate into a sibling ambient provided by the
environment, while in the rule \textsc{In} both ambients are provided
by the environment. In the rule \textsc{CoIn} an ambient provided by
the environment enters an ambient of the process. The rule
\textsc{OutAmb} models an ambient of the process exiting from an
ambient provided by the environment, while in the rule \textsc{Out}
both ambients are provided by the environment.

\begin{figure}[!t]
\begin{center}
\begin{tabular}{p{6.5 cm} p{6 cm}}
\hline \\
\textsc{\scriptsize(Tau)}
$\frac{P \react Q} {P \tr{-} Q}$
& \textsc{\scriptsize (Out)}
$\frac{P \equiv (\nu A) (out \hspace{1mm} m.P_1|P_2)
\hspace{2mm} m \not \in A}
      {P \tr{m[x[-|X_1]|X_2]} (\nu A) (m[X_2]| x[P_1|P_2|X_1])}$
\\ \\
\textsc{\scriptsize(In)}
$\frac{P \equiv (\nu A) (in \hspace{1mm} m.P_1|P_2)
\hspace{2mm} m \not \in A}
      {P \tr{x[-|X_1] | m[X_2]} (\nu A) m[x[P_1|P_2|X_1]|X_2]}$
& \textsc{\scriptsize(OutAmb)} $\frac{P \equiv (\nu A) (n[out \hspace{1mm}
m.P_1|P_2]|P_3) \hspace{2mm} m \not \in A}
         {P \tr{m[-|X_1]} (\nu A) (m[P_3|X_1]| n[P_1|P_2])}$
\\ \\
\textsc{\scriptsize (InAmb)}
$\frac{P \equiv (\nu A) (n[in \hspace{1mm}
m.P_1|P_2] |P_3) \hspace{2mm} m \not \in A}
      {P \tr{-|m[X_1]} (\nu A) (m[n[P_1|P_2]|X_1] | P_3)}$
& \textsc{\scriptsize(Open)} $\frac{P \equiv (\nu A) (open \hspace{1mm} n.P_1
|P_2) \hspace{2mm} n \not \in A}
      {P \tr{-|n[X_1]} (\nu A) (P_1|P_2|X_1)}$
\\ \\
\textsc{\scriptsize (CoIn)}
$\frac{P \equiv (\nu A) (m[P_1]|P_2) \hspace{2mm} m
\not \in A}
         {P \tr{-|x[in \hspace{1mm} m.X_1|X_2]} (\nu A) (m[x[X_1|X_2]|P_1]|P_2)}$
& \textsc{\scriptsize (CoOpen)} $\frac{P \equiv (\nu A) (n[P_1]|P_2) \hspace{2mm}
n \not \in A}
         {P \tr{-|open \hspace{1mm} n.X_1} (\nu A) (P_1|X_1| P_2)}$\\ \\
\hline \\
\end{tabular} \\
\caption{The LTS $M$.} \label{TableLTSProc2}
\end{center}
\end{figure}






\section{On Synchronous and Asynchronous CCS}\label{sec:CCS}
This section introduces the ITSs for CCS and for its asynchronous
variant.  For the sake of space, we do not present the standard CCS,
while we indeed recall the syntax and the semantics of Asynchronous
CCS (ACCS). We then show an ITS for both CCS and ACCS: the former was
introduced in~\cite{bgk:bisimulation-graph-enc}, while the latter is
original. Finally, we show that the IPO-bisimilarity coincides with
the ordinary bisimilarity for CCS; while IPO-bisimilarity is strictly
contained in asynchronous bisimilarity.

\paragraph{Asynchronous CCS.}
Differently from synchronous calculi, where messages are
simultaneously sent and received, in asynchronous communication the
messages are sent and travel through some media until they reach
destination. Thus sending is non blocking (i.e., a process may send
even if the receiver is not ready to receive), while receiving is
(processes must wait until a message becomes available).
Observations reflect the asymmetry: since sending is non blocking,
receiving is unobservable.

Here we shortly introduce the finite fragment of ACCS.
We adopt a presentation reminiscent of
asynchronous $\pi$~\cite{AmadioCS98} that allows the non
deterministic choice for input prefixes (a feature missing
in~\cite{BorealeNP98,CastellaniH98}).

Fig.~\ref{syntaxA} shows the syntax of the calculus. We assume a set
$\mathcal N$ of \emph{names} ranged over by $a, b, c, \ldots$.
As for MAs, we included a set $\{X, Y, \ldots\}$ of \emph{process
  variables}. These are needed for the presentation of the LTS in
Fig.~\ref{LTSACCS}.
We let $P,Q,R, \ldots$ range over the set of \emph{pure} processes,
containing no process variables. Substitution of process variables is
defined analogously to MAs. Note that here we let $M,N,O, \ldots$
range over the set of summation, while in MAs we used
those symbols for capabilities.

The main difference with respect to the standard CCS \cite{MIL:CAC} is the absence of
output prefixes. The occurrence of an unguarded $\bar{a}$ indicates
a message that is available on some communication media named $a$,
and it disappears whenever it is received.

\emph{Structural equivalence} ($\equiv$) is the smallest congruence
induced by the axioms in Fig.~\ref{congA}.
The behaviour of a process $P$ is then described as a reaction
relation ($\react$) over processes up to $\equiv$, obtained by
closing the rules in Fig.~\ref{reductionA}.
For ACCS, there exists also an interactive semantics
expressed by an LTS. This is the transition relation over processes
up to $\equiv$, obtained by the rules in Fig.~\ref{LTSAs}. Here we
use $\mu$ to range over the set of labels $\{\tau,a, \bar{a}\; |\;
a\in \mathcal N\}$. The names of $\mu$, denoted by $n(\mu)$, are
defined as usual.

\begin{figure}[!t]
\center
\begin{tabular}{p{7.3 cm} p{5 cm}}
\hline \\
$P ::= M,\, X,\, \ \bar{a},\, \ (\nu a)P,\, \ P_1 | P_2$   & $M ::= \nil,\, \ \tau.P,\, \ a.P,\, \  M_1 + M_2$ \bigskip\\
\hline
\end{tabular}
\caption{(Extended) Syntax of Asynchronous CCS.} \label{syntaxA}
\vspace{3mm}
\begin{tabular}{p{6 cm} p{6 cm}}
\hline
if $P \equiv Q$ then $P|R \equiv Q|R$ & $P|\nil \equiv P$\\
if $P \equiv Q$ then $(\nu a)P \equiv (\nu a)Q$ & $(\nu a)(\nu b)P \equiv (\nu b)(\nu a)P$\\
if $P \equiv Q$ then $\tau.P \equiv \tau.Q$ & $(\nu a)(P|Q) \equiv P|(\nu a)Q$ \hspace{3mm} if $a \notin fn(P)$\\
if $P \equiv Q$ then $a.P \equiv a.Q$ & $M+N \equiv N+M$ \\
if $M \equiv N$ then $M+O \equiv N+O$ & $(M+N)+O \equiv M+(N+O)$\\
$P|Q \equiv Q|P$ & $M+\nil \equiv M$ \\
$(P|Q)|R \equiv P|(Q|R)$ &  $(\nu a) P \equiv (\nu b)(P\lbrace ^b/_a \rbrace)$ \hspace{.75mm}if $b \notin fn(P)$\\
\hline
\end{tabular}
\caption{Structural congruence.}
\label{congA}
\vspace{3mm}
\begin{tabular}{p{6 cm} p{6 cm}}
\hline
$(a.P + M) | \bar{a} \react P$ & if  $P \react Q$ then  $(\nu a)P \react (\nu a)Q$\\
$\tau.P + M \react P$ & if  $P \react Q$  then  $P|R \react Q|R$ \\
\hline
\end{tabular}
\caption{Reduction relation on pure processes.}
\label{reductionA}
\vspace{3mm}
\begin{tabular}{p{3 cm} p{9 cm}}
\hline
$a.P +M \tr{a} P$ & if  $P \tr{\mu} Q$ then  $(\nu a)P \tr{\mu} (\nu a)Q$ \hspace{3mm} if $a \notin n(\mu)$\\
$\tau.P +M \tr{\tau} P$ & if  $P \tr{\mu} Q$  then  $P|R \tr{\mu} Q|R$ \\
$\bar{a} \tr{\bar{a}} \nil$ & if  $P \tr{a} P_1$ and $Q \tr{\bar{a}} Q_1$ then  $P|Q \tr{\tau} P_1|Q_1$ \\
\hline
\end{tabular}
\caption{Labelled transition system.} \label{LTSAs}
\end{figure}

The main difference with respect to the synchronous calculus lies in the notion
of \emph{observation}. Since sending messages is non-blocking, an
external observer can just send messages to a system without knowing
if they will be received or not. For this reason receiving should
not be observable and thus barbs take into account only outputs.
Formally, $P\downarrow \bar{a}$ if there exists process $Q$ such
that $P\tr{\bar{a}} Q$. This is reflected in the notion of
asynchronous bisimilarity~\cite{AmadioCS98}.

\begin{definition}[Asynchronous Bisimulation]\label{def:Abis}
A symmetric relation $\mathcal R$ is an \emph{asynchronous
bisimulation} if whenever $P\, \mathcal R\,Q$ then

\begin{itemize}
\item if $P\tr{\tau}P'$ then $Q\tr{\tau}Q'$ and $P'\, \mathcal
R\,Q'$,
\item if $P\tr{\bar{a}}P'$ then $Q\tr{\bar{a}}Q'$ and $P'\, \mathcal
R\,Q'$,
\item if $P\tr{a}P'$ then either $Q\tr{a}Q'$ and $P'\, \mathcal
R\,Q'$ or $Q\tr{\tau}Q'$ and $P'\, \mathcal R \, Q' | \bar{a}$.
\end{itemize}
\emph{Asynchronous bisimilarity} $\sim^A$ is the largest
asynchronous bisimulation.
\end{definition}

For example, the processes $a.\bar{a} + \tau.\nil$ and $\tau.\nil$
are asynchronous bisimilar. If $a.\bar{a} + \tau.\nil
\tr{a}\bar{a}$, then $\tau.\nil\tr{\tau}\nil$ and clearly $\bar{a}
\sim^A \nil |\bar{a}$.

\paragraph{An ITS for CCS.}
In \cite{bgk:bisimulation-graph-enc}, the first and the second
author together with K\"onig derived an ITS for the
ordinary CCS by employing the borrowed context mechanism
\cite{EK:06}.

Fig.~\ref{LTSCCS} shows the LTS $C$. The labels of
$C$ are minimal contexts, i.e., they represent the exact
amount of context needed by a process to react. The reactive
semantics of CCS (denoted by $\react$) can be found in
\cite{Mil:PPCA}. Note that both the labels and the resulting states
contain the process variable $X_1$. For the sake of space, we avoided to
report here the (extended) syntax of CCS: this is just the ordinary
syntax of CCS, together with process variables
(analogously to MAs and ACCS).

Following Definition~\ref{instance} for MAs, we define the LTS $C_I$ for processes
over the standard syntax by instantiating the process variable
of the labels and of the resulting states.

Now let us consider the rule \textsc{Rcv}. If a process is ready
to receive on some unrestricted channel $a$, then an interaction
takes place whenever it is embedded in an environment of the shape
$-|\bar{a}.X_1$\footnote{The LTS derived in
\cite{bgk:bisimulation-graph-enc} slightly differs from
$C$. Besides dropping some not-engaged transitions (i.e.,
transitions that do not play any role in the notion of
bisimulation), we simplified the labels for \textsc{Snd} and \textsc{Rcv}:
these were, respectively, $-|\bar{a}.X_1 + M_1$ and $-|\bar{a}.X_1+M_1$ for
$M_1$ a summation variable. Since these variables do not occur in
the resulting states, they also play no role in the derived bisimilarity, and thus
we avoided to consider them in the labels.}. Recall that
the instantiation of the process variable $X_1$ with a process
containing a free name that belongs to the bound names in $A$ is
possible only $\alpha$-converting the resulting process $(\nu
A)(Q|R|X_1)$.

Hereafter we use $\tr{\mu}$ (with $\mu \in \{\tau,a, \bar{a}\; |\;
a\in \mathcal N\}$) to denote the ordinary LTS of CCS
\cite{MIL:CAC}. By comparing the latter with the LTS $C$, it
is easy to see that $P\tr{\tau}Q$ if and only if $P\tr{-}Q$.
Moreover $P\tr{a}Q$ iff $P \tr{-|\bar{a}.X_1} Q|X_1$ and
$P\tr{\bar{a}}Q$ iff $P \tr{-|a.X_1} Q|X_1$. From these
facts, the main result of \cite{bgk:bisimulation-graph-enc}
follows: the ordinary bisimilarity of CCS (denoted by $\sim^{CCS}$)
coincides with IPO bisimilarity. Instead, saturated bisimilarity is
too coarse: the (recursive) processes
$P= rec_z\tau. z$ and $P | a.\nil$ are e.g. saturated bisimilar.

\begin{figure}[!t]
\begin{center}
\begin{tabular}{p{6.5 cm} | p{6 cm}}
\begin{tabular}{l}
\textsc{\scriptsize(Tau)} $\frac{P \react Q}{P \tr{-} Q}$\\\\
\textsc{\scriptsize(Rcv)} $\frac{P \equiv (\nu A) (a.Q+M |
R)\hspace{2mm} a \not \in A}{P \tr{-|\bar{a}.X_1} (\nu A)
(Q|R|X_1)}$\\\\
\textsc{\scriptsize(Snd)} $\frac{P \equiv (\nu A) (\bar{a}.Q +M |
R)\hspace{2mm} a \not \in A}{P \tr{-|a.X_1} (\nu A) (Q|R|X_1)}$\\\\
\end{tabular}
\caption{The LTS $C$}\label{LTSCCS} &
\begin{tabular}{l}
\textsc{\scriptsize(Tau)} $\frac{P \react Q} {P \tr{-} Q}$\\\\
\textsc{\scriptsize(Rcv)} $\frac{P \equiv (\nu A) (a.Q+M |
R)\hspace{2mm} a \not \in A}{P \tr{-|\bar{a}} (\nu A) (Q|R)}$\\\\
\textsc{\scriptsize(Snd)} $\frac{P \equiv (\nu A) (\bar{a} |
Q)\hspace{2mm} a \not \in A}{P \tr{-|a.X_1} (\nu A) (Q|X_1)}$\\\\
\end{tabular}
\caption{The LTS $A$}\label{LTSACCS}
\end{tabular}
\end{center}
\end{figure}

\paragraph{An ITS for ACCS.} Following \cite{bgk:bisimulation-graph-enc},
we propose an ITS for ACCS.
Fig.~\ref{LTSACCS} shows the LTS $A$. The LTS
$A_I$ is defined by instantiating the process variable of
the labels and of the resulting states.

The main difference between $A$ and $C$ is in
the rule \textsc{Rcv}: since outputs have no
continuation in ACCS, then the process variable $X_1$ (that occurs
in $C$) is not needed in $A$.

It is easy to see that also for ACCS there is a
close correspondence between the ordinary LTS semantics (in
Fig.~\ref{LTSAs}) and $A$: $P\tr{\tau}Q$ iff $P\tr{-}Q$,
$P\tr{a}Q$ iff $P \tr{-|\bar{a}} Q$ and $P\tr{\bar{a}}Q$ iff $P
\tr{-|a.X_1} Q|X_1$.

However, in the asynchronous case, IPO-bisimilarity is too fine
grained. Indeed, the processes $a.\bar{a} + \tau.\nil$ and
$\tau.\nil$ are asynchronously bisimilar, but they are not
IPO-bisimilar. In the next section we will introduce a new semantics
for RSs that generalizes both $\sim^{CCS}$ and
$\sim^A$.


\section{A New Semantics for Reactive Systems: $L$-Bisimilarity} \label{sec:L1-Bisimulation}

As shown in Section \ref{sec:CCS}, IPO-bisimilarity coincides with the
ordinary bisimilarity in the case of CCS.  However, for many
interesting calculi, such as MAs and ACCS, it is often too
fine-grained. On the other side, as recalled above for CCS, saturated
bisimilarity is often too coarse.

In this section we introduce $L$-indexed bisimilarity (shortly,
$L$-bisimilarity), a novel kind of bisimilarity parametric with
respect to a class of contexts (also referred to as \emph{labels})
$L$. For each class $L$ satisfying some closure properties, the new
equivalence $\bisl$ is a congruence and $\sim^I \subseteq \bisl
\subseteq \sim^S$.

Intuitively, $L$-bisimulations can be thought of as something in between
IPO-bisimulations and semi-saturated bisimulations: if $C[-]$ belongs
to $L$, then $Q$ must perform $Q\IPOtr{C[-]}$ whenever $P\IPOtr{C[-]}$
(as in the IPO-bisimulation), otherwise $C[Q]\react$ (as in the
semi-saturated bisimulation).

\begin{definition}[$L$-Bisimulation]\label{def:L1bis}
Let $L$ be a class of contexts.
A symmetric relation $\mathcal R$ is an $L$-\emph{bisimulation} if
whenever $P\, \mathcal R\,Q$ then

\[
\hbox{if $P\IPOtr{C[-]}P'$ then}
\left\{%
\begin{array}{ll}
    Q\IPOtr{C[-]}Q' \hbox{ and } P'\, \mathcal R\,Q', & \hbox{if $C[-]\in L$;} \\
    C[Q]\react Q' \hbox{ and } P'\, \mathcal R\,Q', & \hbox{otherwise.} \\
\end{array}%
\right.
\]
$L$-\emph{bisimilarity} $\bisl$ is the largest $L$-bisimulation.
\end{definition}

It is easy to note that $\bisl$ generalizes both $\sim^I$ and
$\sim^{SS}$ (and thus $\sim^S$). Indeed, in order to characterize
the former, it is enough to take as $L$ the whole class of contexts,
while to characterize the latter, we take as $L$ the empty class. In
Section~\ref{L=BS} we will show that for some $L$,
$L$-bisimilarity also coincides with barbed saturated bisimilarity.
In the remainder of this section, we show that $\bisl$ is a
congruence. In order to prove this, we have to require the following
condition on $L$.

\begin{definition} Let $L$ be a class of arrows of a category. We say that $L$ is
IPO-closed, if whenever the following diagram is an IPO and $b\in
L$, then also $c\in L$.
$$
\xymatrix@C=15pt@R=15pt
      {
        &\\
\ar[ur]^b & & \ar[ul]_d\\
& \ar[ul]^a \ar[ur]_c
      }$$
\end{definition}

It is often hard to prove that a class of contexts is IPO-closed.
It becomes easier with concrete instances of RSs
that supply a constructive definition for IPOs, such as
bigraphs and borrowed contexts.

\begin{proposition}\label{prop:BISLisacongruence}

Let us consider a RS with redex RPOs and an IPO-closed
class $L$ of contexts.
Then,  $\bisl$ is a congruence.
\end{proposition}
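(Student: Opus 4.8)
The plan is to prove the congruence property by showing that the context closure of $\bisl$ is again an $L$-bisimulation. Concretely, I would put
$\mathcal{S} = \{\langle E[P], E[Q]\rangle \mid P \bisl Q \text{ and } E[-] \text{ a context}\}$,
which is symmetric because $\bisl$ is, and I would verify that $\mathcal{S}$ is an $L$-bisimulation. Since $\bisl$ is the largest one, this gives $\mathcal{S}\subseteq\bisl$, and reading off $\langle E[P], E[Q]\rangle$ for an arbitrary context $E[-]$ yields exactly $E[P]\bisl E[Q]$, i.e.\ the congruence.

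The engine of the argument is the IPO decomposition and pasting machinery available in any RS with redex RPOs. Take $\langle E[P], E[Q]\rangle\in\mathcal{S}$ and a transition $E[P]\IPOtr{C[-]}P''$, so that $C[-]\circ E[-]\circ P = d\circ l$ is an IPO for some rule $\langle l,r\rangle$ and some $d\in\mathbf{D}$, with $P''=d\circ r$. Reading this redex square with $P$ on the left and context $C[-]\circ E[-]$, redex RPOs supply an RPO $\langle I_5,e,f,g\rangle$; its lower square is an IPO, yielding an IPO transition $P\IPOtr{C'[-]}P'$ with $C'[-]=e$, $P'=f\circ r$, and satisfying $g\circ e = C[-]\circ E[-]$ and $g\circ f = d$. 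Because $\mathbf{D}$ is composition-reflecting and $d=g\circ f\in\mathbf{D}$, both $f,g\in\mathbf{D}$. Applying the pasting lemma to the outer IPO together with this lower IPO then shows that the upper square, with edges $E[-]$, $C'[-]$, $C[-]$, $g$ and satisfying $C[-]\circ E[-]=g\circ C'[-]$, is itself an IPO.

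The two clauses of the game now follow from this upper IPO. If $C[-]\in L$, then IPO-closedness applied to the upper IPO gives $C'[-]\in L$; since $P\bisl Q$, the game returns an IPO transition $Q\IPOtr{C'[-]}Q'$ with $P'\bisl Q'$, say $Q'=f_1\circ r_1$ with $e\circ Q = f_1\circ l_1$ an IPO and $f_1\in\mathbf{D}$. Re-pasting this lower IPO onto the upper IPO (the composing direction of the pasting lemma) produces an IPO transition $E[Q]\IPOtr{C[-]}g\circ Q' = g[Q']$, whose reactive context $g\circ f_1$ lies in $\mathbf{D}$; setting $Q''=g[Q']$ and noting $P''=g\circ f\circ r=g[P']$, we get $\langle P'',Q''\rangle=\langle g[P'],g[Q']\rangle\in\mathcal{S}$. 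If instead $C[-]\notin L$, the game for $P\bisl Q$ still produces, whichever sub-case applies to $C'[-]$, some $Q'$ with $P'\bisl Q'$ and $C'[Q]\react Q'$ (an IPO move being in particular a reduction in its own label context); composing with $g\in\mathbf{D}$ gives $C[E[Q]]=g\circ(C'[Q])\react g[Q']=:Q''$, and again $\langle P'',Q''\rangle\in\mathcal{S}$.

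The main obstacle is the careful bookkeeping of the pasting lemma used in both directions: first decomposing the given outer IPO into a lower IPO for $P$ and an upper IPO for $E[-]$, and then recomposing a lower move of $Q$ with that same upper IPO. Once that is in place the hypotheses play clean, separate roles: redex RPOs make both pasting directions legitimate, composition-reflection of $\mathbf{D}$ guarantees that the residual context $g$ and its composites stay reactive, and IPO-closedness of $L$ is precisely what transfers membership from the outer label $C[-]$ to the inner label $C'[-]$, so that an IPO move of $P$ is answered by an IPO move of $Q$ exactly when the first clause demands it.
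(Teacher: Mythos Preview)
Your proposal is correct and follows essentially the same route as the paper's proof: define the context closure of $\bisl$, decompose the outer IPO for $E[P]$ via a redex RPO into a lower IPO for $P$ and an upper IPO involving $E[-]$, use IPO-closedness of $L$ on the upper square to push membership in $L$ down to the inner label, play the $L$-bisimulation game for $P\bisl Q$, and then recompose (via IPO pasting in the $L$ case, via closure of $\react$ under the reactive context $g$ in the non-$L$ case). The only cosmetic difference is notation and that in the non-$L$ case you appeal directly to closure of $\react$ under $g\in\mathbf D$, whereas the paper phrases the same step as commutativity of the outer square.
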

  \begin{center}
    \begin{tabular}{ccc}
      \xymatrix@C=15pt@R=15pt
      {
        &k_6\\
        k_4 \ar[ur]^{J[-]} & &\\
        &k_2\ar[ul]^{C[-]}  & &k_3 \ar[uull]_{D[-]}\\
        & &  0\ar[ul]^{P} \ar[ur]_{L}
      }

      &

      \xymatrix@C=15pt@R=15pt
     {
        &k_6\\
        k_4 \ar[ur]^{J[-]} & &k_5 \ar[ul]_{D_2[-]}\\
        &k_2\ar[ul]^{C[-]} \ar[ur]^{J'[-]} & &k_3 \ar[ul]_{D_1[-]}\\
        & &  0\ar[ul]^{P} \ar[ur]_{L}
      }

      &

      \xymatrix@C=15pt@R=15pt
     {
        &k_6\\
        k_4 \ar[ur]^{J[-]} & &k_5 \ar[ul]_{D_2[-]}\\
        &k_2\ar[ul]^{C[-]} \ar[ur]^{J'[-]} & &k_3 \ar[ul]_{E[-]}\\
        & &  0\ar[ul]^{Q} \ar[ur]_{L'}
      }
      \\
      ($i$)&($ii$)&($iii$)
    \end{tabular}
  \end{center}
\begin{proof}

  In order to prove this theorem we will use the composition and decomposition properties
  of IPOs, namely Proposition 2.1 and Proposition 2.2 of \cite{DBLP:conf/concur/LeiferM00}.
  We have to prove that if $P\bisl Q$ then $C[P]\bisl C[Q]$.
We show that $\mathcal R=\{(C[P],C[Q])$ s.t. $P\bisl Q
\}$ is an $L$-bisimulation.


  Suppose that $C[P]\IPOtr{J[-]}P'$. Then there exists
  an IPO square like diagram (i) above, where
 $\<L,R\>\in \Rules{R}$, $D[-]\in \Cat{D}$
  and $P'=D[R]$. Since, by hypothesis, the RS has
  redex RPOs, then we can construct an RPO as the one in diagram
  (ii) above. In this diagram, the lower square is an IPO, since RPOs
  are also IPOs (Proposition 1 of
  \cite{DBLP:conf/concur/LeiferM00}). Since the outer square is an IPO and the
  lower square is an IPO, by IPO decomposition property, it follows that
 also the upper square is an IPO.

Since $\Cat{D}$ is composition-reflecting, then both $D_1[-]$ and
$D_2[-]$ belong to $\Cat{D}$, and then $P\IPOtr{J'[-]}D_1[R]$. Now there
are two cases: either $J[-]\in L$ or $J[-]\notin L$.

If $J[-]\in L$, then also $J'[-]\in L$, because $L$ is IPO-closed,
by hypothesis. Since $P \bisl Q$, then $Q\IPOtr{J'[-]}Q''$ and $D_1[R]
\bisl Q''$. This means that there exists an IPO square like the
lower square of diagram (iii) above, where $\<L',R'\>\in \Rules{R}$,
$E[-]\in \Cat{D}$ and $E[R]=Q''$. Now recall by the previous
observation that the upper square of diagram (iii) is also an IPO
and then, by IPO composition, also the outer square is an IPO. This
means that $C[Q]\IPOtr{J[-]}D_2[Q'']$. Since $D_1[R] \bisl Q''$, then
$P'=D[R]=D_2[D_1[R]] \;\mathcal R\; D_2[Q'']$.

If $J[-]\notin L$, then either $J'[-]\in L$ or $J'[-]\notin L$. In both
cases, from $P\IPOtr{J'[-]}D_1[R]$ we derive that $J'[Q]\react Q''$ and
$D_1[R] \bisl Q''$. This means that the lower square of diagram
(iii) above commutes. Since also the upper square commutes, then
also the outer square commutes. This means that $C[Q]\react
D_2[Q'']$. Since $D_1[R] \bisl Q''$, then $P'=D[R]=D_2[D_1[R]] \;
\mathcal R \; D_2[Q'']$.
\end{proof}

\subsection{Barbed Saturated Bisimilarity via $L$-bisimilarity}\label{L=BS}

Here we show that $L$-bisimilarity can also characterize barbed
saturated bisimilarity, whenever the family of barbs and the set of
labels $L$ satisfy suitable conditions. This result will be used in
later sections in order to show that $L$-bisimilarity captures the
correct equivalences for MAs and ACCS.

In order to guarantee that $\bisl \subseteq \bsbis$, we need some
conditions ensuring that the checking of barbs of $\bsbis$ is
already done in $\bisl$ by the labels in $L$.
\begin{definition}\label{def:PredCapturing}
Let $L$ be a set of labels and let $O$ be a set of barbs. We say
that $L$ is \emph{$O$-capturing} if for each barb $o$ there exists a
label $C[-] \in L$ such that for each process $P$, $P \barb o$ if and only if
$P \IPOtr{C[-]} P'$.
\end{definition}

The next two definitions are needed to ensure that $\bsbis \subseteq
\bisl$.

\begin{definition}\label{def:PredStable}
Let $\mathcal R$ be a relation and let $\mathcal P(X,Y)$ be a
predicate on processes. We say that $\mathcal P(X,Y)$ is
\emph{stable under $\mathcal R$} if whenever $P \mathcal R Q$ and
$\mathcal P(P,P')$ there exists $Q'$ such that $\mathcal
P(Q,Q')$ and $P' \mathcal R Q'$.
\end{definition}

For example, the predicates in Fig.~\ref{PredMA} and
Fig.~\ref{fig:arrayCCS} are stable under $\bsbis$.

\begin{definition}\label{def:Lstable}
Let $\mathcal R$ be a relation and let $C[-]$ be a label. We say
that $C[-]$ is \emph{stable under $\mathcal R$} if the predicate
$\mathcal P(X,Y)= X \IPOtr{C[-]} Y$ is stable under $\mathcal R$.
\end{definition}

We can finally state a first correspondence result.

\begin{proposition} \label{Prop:BSB=LB} Let us consider an RS with
  redex RPOs, a set $O$ of contextual barbs and a set $L$ of
  labels. If $L$ is $O$-capturing and its labels are stable under
  $\bsbis$, then $\bsbis$ coincides with $\bisl$.
\end{proposition}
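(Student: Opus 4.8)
The plan is to prove the two inclusions $\bisl \subseteq \bsbis$ and $\bsbis \subseteq \bisl$ separately, exploiting the three definitions (\ref{def:PredCapturing}, \ref{def:PredStable}, \ref{def:Lstable}) that were set up precisely for these two directions, together with the efficient characterization $\bssbis = \bsbis$ from Proposition~\ref{PropBSS=BS}. Since Proposition~\ref{PropBSS=BS} requires redex IPOs and we are given redex RPOs (which entail redex IPOs, as RPOs are IPOs), all the machinery is available. Throughout I would work with the semi-saturated/IPO formulations of both relations so that the bisimulation games can be compared transition-by-transition.

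For the direction $\bisl \subseteq \bsbis$, I would show that $\bisl$ is a barbed saturated bisimulation, or more conveniently (via Proposition~\ref{PropBSS=BS}) a \emph{barbed semi-saturated} bisimulation in the sense of Definition~\ref{def:BSSbis}. Suppose $P \bisl Q$. The reduction clause of Definition~\ref{def:BSSbis} is immediate: if $P \IPOtr{C[-]} P'$, then whether or not $C[-]\in L$, the $L$-bisimulation game of Definition~\ref{def:L1bis} already yields $C[Q]\react Q'$ with $P'\bisl Q'$ (in the $C[-]\in L$ case one has the stronger $Q\IPOtr{C[-]}Q'$, which gives $C[Q]\react Q'$ as well). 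The barb clause is exactly where $O$-capturing enters: given a barb $o$ with $P\barb{o}$, the $O$-capturing hypothesis supplies a label $C_o[-]\in L$ with $P\barb{o}$ iff $P\IPOtr{C_o[-]}P'$. Hence $P\IPOtr{C_o[-]}P'$, and since $C_o[-]\in L$ the $L$-game forces $Q\IPOtr{C_o[-]}Q'$, so by $O$-capturing again $Q\barb{o}$. This establishes that $\bisl$ satisfies the contextual-barb clause at the level of $P\barb{o}\Rightarrow Q\barb{o}$; because $O$ consists of \emph{contextual} barbs (Definition~\ref{contBarb}), this local condition upgrades to $\forall C[-]$, $C[P]\barb{o}\Rightarrow C[Q]\barb{o}$, which is precisely the barb clause of Definition~\ref{def:BSSbis}. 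Thus $\bisl$ is a barbed semi-saturated bisimulation, whence $\bisl\subseteq\bssbis=\bsbis$.

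For the converse $\bsbis \subseteq \bisl$, I would show $\bsbis$ is an $L$-bisimulation. Suppose $P\bsbis Q$ and $P\IPOtr{C[-]}P'$. If $C[-]\notin L$, then $C[P]\react P'$, and the saturated reduction clause of $\bsbis$ gives $C[Q]\react Q'$ with $P'\bsbis Q'$, which is exactly the ``otherwise'' branch of Definition~\ref{def:L1bis}. The delicate case is $C[-]\in L$, where the $L$-game demands the \emph{minimal} transition $Q\IPOtr{C[-]}Q'$, not merely $C[Q]\react Q'$; recovering minimality on $Q$'s side is the crux. Here the stability hypothesis of Definition~\ref{def:Lstable} is designed to do exactly this: since $C[-]\in L$ is stable under $\bsbis$, the predicate $\mathcal P(X,Y)=X\IPOtr{C[-]}Y$ transfers along $\bsbis$, so from $P\bsbis Q$ and $P\IPOtr{C[-]}P'$ we obtain $Q'$ with $Q\IPOtr{C[-]}Q'$ and $P'\bsbis Q'$, matching the ``$C[-]\in L$'' branch. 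Since $\bsbis$ is symmetric, this completes the verification.

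\textbf{The main obstacle} I anticipate is the $C[-]\in L$ case of the second inclusion: the $L$-bisimulation game insists on a matching \emph{IPO} (minimal) transition from $Q$, whereas $\bsbis$ only guarantees matching \emph{reductions} in arbitrary contexts. Everything hinges on the stability condition of Definition~\ref{def:Lstable} being strong enough to promote ``$C[Q]\react Q'$'' to the minimal ``$Q\IPOtr{C[-]}Q'$''; verifying that this is genuinely what the hypothesis delivers (rather than something weaker) is the one step that requires care, and it is the reason the proposition assumes the labels of $L$ are stable under $\bsbis$. The two easy directions --- the reduction clauses and the barb/$O$-capturing argument --- are essentially bookkeeping once Proposition~\ref{PropBSS=BS} and the contextuality of the barbs are invoked.
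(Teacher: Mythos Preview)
Your proposal is correct and follows essentially the same approach as the paper: both inclusions are proved exactly as you outline, using stability of labels in $L$ for the $\bsbis\subseteq\bisl$ direction and the $O$-capturing property together with contextuality of barbs (via the semi-saturated characterization of Proposition~\ref{PropBSS=BS}) for the $\bisl\subseteq\bsbis$ direction. Your explicit remark that redex RPOs yield redex IPOs is a small clarification the paper leaves implicit.
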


\begin{proof}
In order to prove that $\bsbis \subseteq \bisl$, we show that
$\mathcal R=\{(P,Q)$ s.t. $P\bsbis Q \}$ is an
$L$-bisimulation.

Suppose that $P \IPOtr{C[-]}P'$. We have two cases: either $C[-]\in
L$ or $C[-]\notin L$. If $C[-]\in L$, then $C[-]$ is stable under
$\bsbis$ and thus, since $P\bsbis Q$, $Q \IPOtr{C[-]}Q'$ and
$P'\bsbis Q'$. For the case that $C[-]\notin L$, it is enough to
note that, since $P \IPOtr{C[-]}P'$, then $C[P]\react P'$. Since
$P\bsbis Q$, then $C[Q]\react Q'$ and $P' \bsbis Q'$.

Now we show that $\mathcal R=\{(P,Q)$ s.t. $P\bisl Q \}$
is a barbed semi-saturated bisimulation (i.e., $\bisl \subseteq
\bssbis$) and thus, since the RS has redex IPOs, by
Proposition \ref{PropBSS=BS} it follows that $\bisl \subseteq
\bsbis$.

At first, we note that, since $O$ is a set of contextual barbs,
in order to show that $\mathcal R$ satisfies the first condition of
Definition \ref{def:BSSbis} it suffices to show that $P\barb o$
implies $Q\barb o$. Since $L$ is $O$-capturing,  if $P\barb o$
then there is a label $C[-]\in L$ such that $P\barb o$
if and only if $P\IPOtr{C[-]}$. Since $P\bisl Q$, then also
$Q\IPOtr{C[-]}$ and $Q\barb o$.

In order to prove the second condition of Definition
\ref{def:BSSbis}, it is enough to note that if $P\IPOtr{C[-]}P'$
then, for either $C[-]\in L$ or $C[-]\notin L$,
$C[Q]\react Q'$ with $P'\bisl Q'$.
\end{proof}

As a corollary of the previous definition, we obtain the following property
that allows to check whenever IPO-bisimilarity coincides with barbed
saturated one.

\begin{lemma}
  Let us consider an RS with redex IPOs and a set $O$ of contextual
  barbs. If the set of all labels is $O$-capturing and each label is
  stable under $\bsbis$, then $\sim^I$ coincides with $\bsbis$.
\end{lemma}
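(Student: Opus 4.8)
The plan is to obtain the lemma as the special case of Proposition~\ref{Prop:BSB=LB} in which $L$ is taken to be the \emph{entire} class of contexts of the reactive system. First I would record the elementary observation, already noted just after Definition~\ref{def:L1bis}, that with this choice of $L$ one has $\bisl = \sim^I$. This is purely definitional: if every context lies in $L$, then in Definition~\ref{def:L1bis} the ``otherwise'' branch is never triggered, so a symmetric relation $\mathcal R$ is an $L$-bisimulation exactly when, whenever $P\IPOtr{C[-]}P'$, also $Q\IPOtr{C[-]}Q'$ with $P'\,\mathcal R\,Q'$ --- which is precisely the IPO-bisimulation game. Hence the $L$-bisimulations coincide with the IPO-bisimulations, and so do their largest elements, giving $\bisl = \sim^I$.

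Next I would verify that the remaining hypotheses of Proposition~\ref{Prop:BSB=LB} are met for this $L$. The set $O$ is contextual by assumption; the hypothesis ``the set of all labels is $O$-capturing'' is exactly the requirement that $L$ be $O$-capturing; and ``each label is stable under $\bsbis$'' is exactly the requirement that the labels of $L$ be stable under $\bsbis$. Granting the proposition, we would conclude $\bsbis = \bisl$, and combining with the first step, $\sim^I = \bisl = \bsbis$, as desired.

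The one point requiring care --- and essentially the only obstacle --- is a hypothesis mismatch: Proposition~\ref{Prop:BSB=LB} is stated for an RS with redex RPOs, whereas the present lemma assumes only the weaker condition of redex IPOs. I would resolve this by inspecting the proof of Proposition~\ref{Prop:BSB=LB} and observing that the stronger RPO hypothesis is in fact never used there. Indeed, the inclusion $\bsbis\subseteq\bisl$ rests only on the facts that a label in $L$ is stable under $\bsbis$ and that $P\IPOtr{C[-]}P'$ entails $C[P]\react P'$, neither of which needs any RPO structure; while the inclusion $\bisl\subseteq\bsbis$ passes through $\bisl\subseteq\bssbis$ and Proposition~\ref{PropBSS=BS}, whose sole requirement is redex IPOs. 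Thus the argument of Proposition~\ref{Prop:BSB=LB} goes through verbatim under the weaker assumption, and the lemma follows. Note moreover that IPO-closedness of $L$ plays no role here, since we do not invoke the congruence result Proposition~\ref{prop:BISLisacongruence} but only the coincidence with $\bsbis$.
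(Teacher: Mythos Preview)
Your proposal is correct and follows exactly the approach the paper intends: the lemma is stated as a corollary of Proposition~\ref{Prop:BSB=LB}, obtained by taking $L$ to be the full class of contexts so that $\bisl=\sim^I$. Your additional care in noting and resolving the redex-RPO versus redex-IPO hypothesis mismatch is well-placed; the paper silently glosses over this, but your inspection of the proof of Proposition~\ref{Prop:BSB=LB} correctly confirms that only redex IPOs are actually used there.
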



\section{$L$-Bisimilarity for Mobile Ambients}\label{sec:LbisMA}
This section proposes a new labelled characterization of the
reduction barbed congruence for MAs, presented in Section~\ref{sec:MA}.
In particular, by using the ITS $M_I$ (also
in Section~\ref{sec:MA}) we define an $L$-bisimilarity
that captures barbed saturated bisimilarity for MAs,
coinciding with reduction barbed congruence.


\begin{proposition}[see~\cite{BGMFOSSACS09}, Theorem~3] 
\label{RBC=BSB}
  Reduction barbed congruence over MAs $\sim^{MA}$ coincides with
  barbed saturated bisimilarity $\bsbis_{\scriptscriptstyle{M}}$.
\end{proposition}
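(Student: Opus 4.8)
The plan is to prove the two inclusions $\sim^{MA} \subseteq \bsbis_{\scriptscriptstyle{M}}$ and $\bsbis_{\scriptscriptstyle{M}} \subseteq \sim^{MA}$ separately, by exhibiting for each direction a relation that satisfies the defining clauses of the target equivalence. The essential observation linking the two definitions is that, in the RS $M$ for MAs, the saturated reduction $C[P] \react P'$ coincides with the reduction semantics of the underlying calculus applied to the process $C[P]$, and that the barb $\barb{n}$ used in Definition~\ref{StrongCong} is exactly the family $O$ of barbs fixed for the reactive system. Both equivalences share the same reduction relation $\react$ and the same barbs; the difference is only in how contextuality and barb-preservation are packaged.

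\textbf{First I would} establish $\bsbis_{\scriptscriptstyle{M}} \subseteq \sim^{MA}$. I take $\mathcal R = \bsbis_{\scriptscriptstyle{M}}$ and verify the three clauses of Definition~\ref{StrongCong}. The barb clause and the reduction clause of reduction barbed congruence are immediate: instantiating the universal quantifier over contexts in Definition~\ref{def:BSbis} with the identity context $C[-] = -$ yields precisely ``if $P\barb{n}$ then $Q\barb{n}$'' and ``if $P\react P'$ then $Q\react Q'$ with $P'\mathcal R Q'$''. The only real content is the congruence clause $\forall C[-],\ C[P]\,\mathcal R\,C[Q]$: here I must show that $\bsbis_{\scriptscriptstyle{M}}$ is itself a congruence. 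This follows from the remark after Definition~\ref{def:BSbis}, namely that $\bsbis$ is the largest barbed bisimulation that is also a congruence, so closure under all contexts is built into the definition and the clause holds.

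\textbf{For the converse} $\sim^{MA} \subseteq \bsbis_{\scriptscriptstyle{M}}$ I take $\mathcal R = \sim^{MA}$ and verify that it is a barbed saturated bisimulation in the sense of Definition~\ref{def:BSbis}. Fix $P\,\mathcal R\,Q$ and an arbitrary context $C[-]$. By the congruence clause of reduction barbed congruence, $C[P]\,\mathcal R\,C[Q]$. Applying the barb clause to the pair $(C[P],C[Q])$ gives ``if $C[P]\barb{n}$ then $C[Q]\barb{n}$'', which is the first clause of barbed saturated bisimulation; and applying the reduction clause to $(C[P],C[Q])$ gives ``if $C[P]\react P'$ then $C[Q]\react Q'$ with $P'\,\mathcal R\,Q'$'', which is the second clause. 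Thus every context $C[-]$ is handled uniformly by first using congruence to push $P,Q$ under $C[-]$ and then invoking the barb- and reduction-preservation clauses.

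\textbf{The main obstacle} is not the bisimulation bookkeeping, which is routine once the definitions are aligned, but checking that the barb $\barb{n}$ and the reduction $\react$ of the abstract reactive system $M$ genuinely coincide with the concrete barbs and reductions of MAs from Fig.~\ref{reduction} and from the definition $P\barb{n}$ iff $P\equiv(\nu A)(n[Q]\mid R)$ with $n\notin A$. This is a matching-of-definitions step that must be argued carefully --- in particular one must confirm that the contexts quantified over in Definition~\ref{StrongCong} and in Definition~\ref{def:BSbis} range over the same class of MA contexts, so that the two congruence conditions are literally the same quantification. Once this identification is made, both inclusions are direct consequences of instantiating quantifiers as above, and the proposition follows. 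Since this identification is exactly the content of the cited result (\cite{BGMFOSSACS09}, Theorem~3), I would simply appeal to it rather than reprove the coincidence of the two reduction semantics.
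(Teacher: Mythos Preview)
The paper does not actually prove this proposition: it is stated with a citation to \cite{BGMFOSSACS09}, Theorem~3, and no proof is given in the present paper. So there is no in-paper argument to compare against.

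Your sketch is essentially the standard argument and is correct. The two inclusions go through exactly as you describe: for $\bsbis_{\scriptscriptstyle{M}}\subseteq\sim^{MA}$ the only non-immediate clause is congruence of $\bsbis$, and that follows from the observation that the defining clauses of barbed saturated bisimulation are stable under precomposition with an arbitrary context $D[-]$ (replace the quantified $C[-]$ by $C[D[-]]$). For $\sim^{MA}\subseteq\bsbis_{\scriptscriptstyle{M}}$ your use of the congruence clause to reduce the context-quantified conditions to the plain barb and reduction clauses is exactly right.

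One small wrinkle: in your final paragraph you say the ``identification'' of the abstract RS reductions and barbs with the concrete MA ones ``is exactly the content of the cited result'' and that you would appeal to it. But the cited result \emph{is} the proposition you are proving, so this is circular as phrased. What you really need from \cite{BGMFOSSACS09} is not Theorem~3 itself but the preliminary fact that the RS built for MAs has as its reduction relation and barbs precisely those of Figures~\ref{reduction} and the ambient-barb predicate; once that alignment is in place, your two-inclusion argument is self-contained and does not require invoking Theorem~3.
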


As shown in Section~\ref{L=BS}, we can characterize barbed saturated bisimilarity
on a set of contextual barbs $O$ through the IPO transition system and a set of labels $L$.
In particular, as required by Proposition \ref{Prop:BSB=LB}, the set $L$
must be $O$-capturing and each $C[-] \in L$ must be stable under the barbed saturated bisimilarity.

We denote by $O_{\scriptscriptstyle{M}}$ the set of barbs of MAs,
recalling that MAs barbs are contextual barbs~\cite{BGMFOSSACS09}.

\begin{proposition}[see~\cite{BGMFOSSACS09}, Proposition~6] 
  \label{MAsCB}
$O_{\scriptscriptstyle{M}}$ is a set of contextual barbs.
\end{proposition}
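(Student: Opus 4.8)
The plan is to show that each barb $\barb{n} \in O_{\scriptscriptstyle{M}}$ satisfies Definition~\ref{contBarb}. Concretely, I would fix processes $P, Q$ with the property that $P\barb{n}$ implies $Q\barb{n}$, fix an arbitrary context $C[-]$, assume $C[P]\barb{n}$, and derive $C[Q]\barb{n}$; since $n$ is arbitrary this establishes the statement. The heart of the argument is a dichotomy on the \emph{origin} of the top-level ambient witnessing $C[P]\barb{n}$: it is either supplied by the context $C[-]$ independently of the contents of the hole, or it is contributed by the process $P$ plugged into the hole.

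First I would unfold the barb: $C[P]\barb{n}$ means $C[P]\equiv(\nu A)(n[R]\mid R')$ with $n\notin A$. I would then reason by induction on the structure of $C[-]$ (equivalently, on the position of the hole), tracking the structural congruence laws of Fig.~\ref{cong} --- in particular commutativity and associativity of $\mid$, the law $P\mid\nil\equiv P$, and scope extrusion $(\nu m)(P\mid Q)\equiv P\mid(\nu m)Q$ for $m\notin fn(P)$ --- so as to locate the unrestricted top-level $n$-ambient relative to the hole. The key structural fact is that a sub-term of $P$ can occur as an unrestricted top-level component of $C[P]$ only when the hole occupies a top-level parallel position of $C[-]$, under restrictions that do not bind $n$; in every other situation (the hole guarded by a capability prefix, nested inside another ambient, or captured by a binder $(\nu n)$) the predicate $C[-']\barb{n}$ does not depend on what is plugged into the hole.

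This yields the two cases. In the \emph{context-determined} case the witnessing $n$-ambient lies entirely within $C[-]$, so $C[P']\barb{n}$ holds for every $P'$, and in particular $C[Q]\barb{n}$. In the \emph{process-contributed} case the hole sits at a top-level parallel position and $P$ itself exhibits an unrestricted top-level $n$-ambient, whence $P\barb{n}$; by the hypothesis $Q\barb{n}$, so $Q\equiv(\nu B)(n[S]\mid S')$ with $n\notin B$, and replugging $Q$ into the same position ($\alpha$-converting $B$ away from $fn(C[-])$ to avoid capture) makes this ambient surface at the top level of $C[Q]$, giving $C[Q]\barb{n}$.

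I expect the main obstacle to be the bookkeeping in the process-contributed case: one must verify that the restriction $A$ in $C[P]\equiv(\nu A)(n[R]\mid R')$ does not bind $n$, and that after suitable $\alpha$-conversion the free name $n$ supplied by $Q$ is not captured by the binders of $C[-]$. This reduces to showing that the hole cannot sit beneath a $(\nu n)$ while still contributing an \emph{unrestricted} top-level $n$-ambient, which is exactly what the side conditions on scope extrusion in Fig.~\ref{cong} guarantee. The remaining work is a routine, if tedious, induction on the shape of the context.
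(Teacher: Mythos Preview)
The paper does not supply its own proof of this proposition: it is stated with a pointer to \cite{BGMFOSSACS09}, Proposition~6, and no argument appears in the present text. There is therefore nothing in the paper to compare your proposal against.

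That said, your plan is the standard one and is sound. The dichotomy you isolate---either the witnessing top-level $n$-ambient is supplied by $C[-]$ independently of the hole, or the hole sits in an unguarded top-level parallel position under restrictions not binding $n$, in which case $P$ itself must satisfy $\barb{n}$---is exactly how the result is proved, and the structural induction on $C[-]$ together with the scope-extrusion side conditions handles the $\alpha$-conversion bookkeeping you flag. One small point worth making explicit in the write-up: for the inductive step $C[-]=C_1[-]\mid R$ you will want to argue that a top-level unrestricted $n$-ambient in $C_1[P]\mid R$ lies either in $C_1[P]$ or in $R$ (a decomposition lemma for $\barb{n}$ across parallel composition), which is what lets you push the induction hypothesis through; this is routine but is the place where working up to $\equiv$ requires a moment's care.
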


Therefore, we can characterize reduction barbed congruence over MAs by instantiating
Definitions~\ref{def:L1bis} with the $\ilts$ $M_I$ and a set $L$ of labels having the two properties
said above.

First of all, we find some labels of $M_I$ that capture the barbs of MAs.
This ensures that the checking of barbs of the barbed saturated bisimilarity is
done in the $L$-bisimilarity by the first condition of its definition.
It is easy to note that a MAs process $P$ observes a unrestricted ambient $n$ at top-level,
in symbols $P \barb n$, if and only if it can execute a transition labelled with
$-|open \hspace{1 mm} n.T_1$ or with $-|m[in \hspace{1mm} n.T_1|T_2]$.
Therefore, $L$ is $O_{\scriptscriptstyle{M}}$-capturing if it contains at least one kind of these labels.
We choose to consider labels of the first type, that is, having
the shape  $-|open \hspace{1mm} n.T_1$, for $n$ ambient name and $T_1$ pure process.

It is possible to prove that these labels are stable under
$\bsbis_{\scriptscriptstyle{M}}$. Therefore,
if we consider the set $L$ defined below, we obtain an $L$-bisimilarity for MAs
that is able to characterize $\bsbis_{\scriptscriptstyle{M}}$.



\begin{proposition} \label{LMAO-capturing}\label{def:LMAs}
Let $L_M$ be the set of all labels of the $\ilts$ $M_I$ having
the shape  $-|open \hspace{1mm} n.T_1$, for $n$ ambient name and $T_1$ pure process.
Then, $L_M$ is $O_{\scriptscriptstyle{M}}$-capturing.
\end{proposition}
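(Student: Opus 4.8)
The plan is to unfold Definition~\ref{def:PredCapturing} and verify both implications directly, using the inference rules for $M_I$ obtained via Definition~\ref{instance} from the LTS $M$ of Fig.~\ref{TableLTSProc2}. I must show that for every barb $o \in O_{\scriptscriptstyle{M}}$ there is a label in $L_M$ capturing it. Since each barb of MAs is of the form $\barb{n}$ for an ambient name $n$, I propose to associate to $\barb{n}$ the label $-|open \hspace{1mm} n.\nil$ (taking $T_1 = \nil$, the simplest pure process), which indeed lies in $L_M$. The goal is then to prove that for every pure process $P$,
\[
P \barb{n} \quad\text{if and only if}\quad P \IPOtr{-|open \hspace{1mm} n.T_1} P' \text{ for some } P'.
\]

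First I would treat the ``only if'' direction. Assume $P \barb{n}$, which by definition means $P \equiv (\nu A)(n[Q]|R)$ with $n \notin A$. I would then observe that this is precisely the structural shape required by the premise of the rule \textsc{CoOpen} in Fig.~\ref{TableLTSProc2}: matching $n[P_1]|P_2$ against $n[Q]|R$ with the side condition $n \notin A$, the rule fires and produces a transition labelled $-|open \hspace{1mm} n.X_1$ in $M$. Instantiating the process variable $X_1$ (via Definition~\ref{instance}) with any pure process $T_1$, I obtain $P \IPOtr{-|open \hspace{1mm} n.T_1} P'$ in $M_I$, as desired. For the ``if'' direction, I would run the argument backwards: if $P$ admits a transition with a label of the form $-|open \hspace{1mm} n.T_1$, then by inspection of the rules in Fig.~\ref{TableLTSProc2} the only rule producing a label of this syntactic shape is \textsc{CoOpen}, whose premise forces $P \equiv (\nu A)(n[P_1]|P_2)$ with $n \notin A$, which is exactly the condition $P \barb{n}$.

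The main obstacle I expect is the careful case analysis on the label shapes, namely showing that \emph{no other} rule in Fig.~\ref{TableLTSProc2} can yield a label syntactically equal (up to $\equiv$ and the substitution $\sigma$ of Definition~\ref{instance}) to $-|open \hspace{1mm} n.T_1$. In particular I would need to rule out that an instantiation of, say, the label $-|x[in \hspace{1mm} m.X_1|X_2]$ of \textsc{CoIn} or the capability-free labels of \textsc{Open}, \textsc{InAmb} could accidentally produce an $open$ prefix at top level; this hinges on the fact that the substitutions $\sigma$ only replace process and name variables and never introduce a fresh $open$ capability in the context position occupied by the hole. Once this syntactic rigidity of the labels is established, the biconditional follows immediately and hence $L_M$ is $O_{\scriptscriptstyle{M}}$-capturing.
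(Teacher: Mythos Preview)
Your proposal is correct and follows exactly the same approach as the paper: associate to each barb $\barb{n}$ a label of the form $-|open\ n.T_1$ and verify the biconditional via the rule \textsc{CoOpen}. The paper's proof is in fact much terser than yours---it simply asserts the equivalence $P\barb{n}$ iff $P \tr{-|open\ n.T_1}_{M_I} P'$ as ``easy to note'' without spelling out the case analysis on label shapes that you (correctly) identify as the only point requiring care; your inspection that no other rule in Fig.~\ref{TableLTSProc2} can produce a label with an $open$ capability at top level is the detailed justification the paper omits.
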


\begin{proof}
We have to show that for each barb $n \in O_{\scriptscriptstyle{M}}$ there exists a label $C[-] \in L_M$
such that for each process $P$, $P \barb n$ if and only if $P \tr{C[-]}_{M_I} P'$.

It is easy to note that, given a barb $n \in
O_{\scriptscriptstyle{M}}$, we have that for each process $P$, $P
\barb n$ if and only if $P \tr{-|open \hspace{1mm} n.T_1}_{M_I} P'$,
with $T_1$ pure process. Since we know that $L_M$ contains all
labels having the shape  $-|open \hspace{1mm} n.T_1$, for $n$
ambient name and $T_1$ pure process, we can conclude that $L_M$ is
$O_{\scriptscriptstyle{M}}$-capturing.
\end{proof}

Now, in order to prove that each $C[-] \in L_M$ is stable under
$\bsbis_{\scriptscriptstyle{M}}$, we exploit a predicate such that
it is stable under $\bsbis_{\scriptscriptstyle{M}}$ and equivalent
to the one of Definition \ref{def:Lstable}.

\begin{lemma} \label{lemma:PredMA}
Let $\mathcal P^{-|open \hspace{1mm} n.T_1}(X,Y)$ be the binary predicate
on MAs processes shown in Fig. \ref{PredMA}, for $n$ ambient name
and $T_1$ pure process.
Then, $\mathcal P^{-|open \hspace{1mm} n.T_1}(X,Y)$ is stable under $\bsbis_{\scriptscriptstyle{M}}$
and for each $P$ and $P'$, $\mathcal P^{-|open \hspace{1mm} n.T_1}(P,P')$ if and only if
$P \tr{-|open \hspace{1mm} n.T_1}_{M_I} P'$.
\end{lemma}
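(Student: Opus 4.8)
The plan is to prove Lemma~\ref{lemma:PredMA} in two parts. The first part establishes the equivalence between the predicate $\mathcal P^{-|open \hspace{1mm} n.T_1}(X,Y)$ of Fig.~\ref{PredMA} and the transition relation $P \tr{-|open \hspace{1mm} n.T_1}_{M_I} P'$; the second part shows that this predicate is stable under $\bsbis_{\scriptscriptstyle{M}}$ in the sense of Definition~\ref{def:PredStable}. Since Fig.~\ref{PredMA} is not visible in the excerpt, I will reason about what that predicate must be: it should be a syntactic (congruence-based) characterization of exactly those pairs $(P,P')$ for which $P$ exposes an $open\ n$ capability at top level and $P'$ is the corresponding residual after interacting with the environment $-|open\ n.T_1$. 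Concretely, following the rule \textsc{CoOpen}, I expect $\mathcal P^{-|open \hspace{1mm} n.T_1}(P,P')$ to hold iff $P \equiv (\nu A)(n[P_1]|P_2)$ with $n \notin A$ and $P' \equiv (\nu A)(P_1|T_1|P_2)$.

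For the first (equivalence) part, I would argue by unfolding Definition~\ref{instance} together with the rule \textsc{CoOpen} of Fig.~\ref{TableLTSProc2}. The forward direction notes that a transition $P \tr{-|open \hspace{1mm} n.T_1}_{M_I} P'$ arises, by Definition~\ref{instance}, from a transition $P \tr{-|open \hspace{1mm} n.X_1}_{M} Q_\epsilon$ over the extended syntax together with a substitution $\sigma$ instantiating $X_1$ by $T_1$ and yielding $Q_\epsilon\sigma \equiv P'$. Inspecting the LTS rules, the only rule producing a label of the form $-|open\ n.X_1$ is \textsc{CoOpen}, whose premise is exactly $P \equiv (\nu A)(n[P_1]|P_2)$ with $n \notin A$; substituting $T_1$ for $X_1$ in its target $(\nu A)(P_1|X_1|P_2)$ gives the shape demanded by the predicate. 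The converse direction just runs \textsc{CoOpen} in the forward direction and applies the instantiating substitution. This part is essentially a case analysis on the LTS rules and should be routine, the only care being the $\alpha$-conversion side-condition on substitutions (so that $T_1$ does not capture names in $A$), which is exactly the caveat flagged after Definition~\ref{instance}.

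The hard part will be the stability claim. Here I must show: whenever $P \bsbis_{\scriptscriptstyle{M}} Q$ and $\mathcal P^{-|open \hspace{1mm} n.T_1}(P,P')$ holds, there exists $Q'$ with $\mathcal P^{-|open \hspace{1mm} n.T_1}(Q,Q')$ and $P' \bsbis_{\scriptscriptstyle{M}} Q'$. The natural strategy is to realize the predicate-step as a genuine reduction in a suitable context and then invoke the bisimulation clause of Definition~\ref{def:BSbis}. Specifically, if $\mathcal P^{-|open \hspace{1mm} n.T_1}(P,P')$ holds, then by the equivalence just proved $P \tr{-|open \hspace{1mm} n.T_1}_{M_I} P'$, which by the saturated/IPO relationship means that placing $P$ in the context $-|open\ n.T_1$ yields a reduction $(open\ n.T_1)|P \react P'$ (using the \textsc{open} reduction axiom of Fig.~\ref{reduction} matched against $n[P_1]$). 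Applying the context $C[-] = -|open\ n.T_1$ to the hypothesis $P \bsbis_{\scriptscriptstyle{M}} Q$, the saturated-bisimulation clause gives $C[Q] \react Q'$ with $P' \bsbis_{\scriptscriptstyle{M}} Q'$.

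The remaining obstacle — and the genuinely delicate step — is to argue that the reduction $C[Q]\react Q'$ must itself arise from $Q$ exposing an $open\ n$ redex at top level (and not from some internal reduction of $Q$ or from the inserted $open\ n.T_1$ reacting with some \emph{other} ambient), so that $Q'$ has the predicate shape and $\mathcal P^{-|open \hspace{1mm} n.T_1}(Q,Q')$ holds. Handling this cleanly requires knowing what the figure's predicate actually asserts about \emph{internal} $\tau$-moves of $Q$; I expect the predicate to be defined up to such internal reductions (a weak, or closed-under-$\equiv$-and-$\react$, variant), precisely so that stability goes through. I would close the argument by a case analysis on the origin of the reduction $C[Q]\react Q'$, showing that in every case the residual can be written in the required form, possibly after absorbing preparatory $\tau$-steps of $Q$ into the predicate. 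This is the step where the exact formulation in Fig.~\ref{PredMA} is indispensable, and where I would expect the authors' proof to do its real work.
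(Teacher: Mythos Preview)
Your guess about the shape of the predicate in Fig.~\ref{PredMA} is wrong, and this derails the whole plan. The predicate is \emph{not} the syntactic ``$P \equiv (\nu A)(n[P_1]\,|\,P_2)$ and $P' \equiv (\nu A)(P_1\,|\,T_1\,|\,P_2)$'' you assumed. It is an \emph{operational} predicate: one picks a fresh ambient name $m\notin fn(X)$, sets $C'[-] = -\,|\,open\ n.(m[\nil]\,|\,open\ m.T_1)$, and asks that $C'[X]\react P''\react Y$ with $P''\barb{m}$ and $Y\not\barb{m}$. In other words, the predicate is expressed purely in terms of reductions and barb tests.

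This design choice is precisely what makes the stability proof trivial, eliminating the ``genuinely delicate step'' you identified. Every ingredient of the predicate---contextual reduction and (non)satisfaction of barbs---is preserved step by step by $\bsbis_{\scriptscriptstyle{M}}$: from $P\bsbis_{\scriptscriptstyle{M}} Q$ and $C'[P]\react P''$ one gets $C'[Q]\react Q''$ with $P''\bsbis_{\scriptscriptstyle{M}} Q''$; then $P''\barb{m}$ transfers to $Q''\barb{m}$; then $P''\react P'$ gives $Q''\react Q'$ with $P'\bsbis_{\scriptscriptstyle{M}} Q'$; and $P'\not\barb{m}$ transfers to $Q'\not\barb{m}$. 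No case analysis on ``which reduction fired'' is ever needed.

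The real work is thus shifted to the equivalence part, and your argument there does not apply to the actual predicate. The paper uses the barb on the fresh name $m$ as a flag: since $m$ is fresh for $P$ and $m[\nil]$ is guarded by $open\ n$ inside $C'[-]$, the condition $P''\barb{m}$ forces the first reduction to be the consumption of $open\ n$, hence $P\equiv(\nu A)(n[P_1]\,|\,P_2)$ with $n\notin A$ and $P''=(\nu A)(P_1\,|\,P_2)\,|\,m[\nil]\,|\,open\ m.T_1$. Likewise, freshness of $m$ and $P'\not\barb{m}$ force the second reduction to be the $open\ m$ one, yielding $P'\equiv(\nu A)(P_1\,|\,P_2)\,|\,T_1$, which is exactly the target of the \textsc{CoOpen}-instantiated transition. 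The converse direction is the easy computation you sketched, but run through $C'[-]$ rather than the bare $-\,|\,open\ n.T_1$.

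So the missing idea is: encode the IPO transition as a sequence of reductions-plus-barbs (a ``testing'' context with a fresh witness ambient), so that stability under $\bsbis_{\scriptscriptstyle{M}}$ follows mechanically from the bisimulation clauses, and the case analysis you worried about is replaced by the freshness argument in the equivalence proof.
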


\begin{proof}
We begin by proving that the predicate $\mathcal P^{-|open \hspace{1mm} n.T_1}(X,Y)$
is stable under $\bsbis_{\scriptscriptstyle{M}}$.

Assume that $P \bsbis_{\scriptscriptstyle{M}} Q$ and $\mathcal P^{-|open \hspace{1mm} n.T_1}(P,P')$ holds.
Since $\mathcal P^{-|open \hspace{1mm} n.T_1}(P,P')$ holds, then
there exists a process $P''$ and an ambient $m$ fresh for $P$ and $Q$,
such that $C'[P] \react P''$, $P'' \barb m$,
$P'' \react P'$ and $P' \not \barb m$, with
$C'[-] = -|open \hspace{1mm} n.(m[\nil]|open \hspace{1mm} m.T_1)$.

Since $C'[P] \react P''$ and $P \bsbis_{\scriptscriptstyle{M}} Q$, then
$C'[Q] \react Q''$ and $P'' \bsbis_{\scriptscriptstyle{M}} Q''$.
Therefore, it is obvious that also $Q'' \barb m$.
Now, we know that $P'' \react P'$, hence we can say
that $Q'' \react Q'$ and $P' \bsbis_{\scriptscriptstyle{M}} Q'$.
From this follows that, since $P' \not \barb m$, then also $Q' \not \barb m$.
So, we can conclude that $\mathcal P^{-|open \hspace{1mm} n.T_1}(Q,Q')$ holds, hence $\mathcal P^{-|open \hspace{1mm} n.T_1}(X,Y)$
is stable under $\mathcal R$.

Now we show that for each $P$ and $P'$, $\mathcal P^{-|open \hspace{1mm} n.T_1}(P,P')$ iff
$P \tr{-|open \hspace{1mm} n.T_1}_{M_I} P'$.

Assume that $\mathcal P^{-|open \hspace{1mm} n.T_1}(P,P')$ holds.
This means that there exists a process $P''$ and an ambient $m$ fresh for $P$,
such that $C'[P] \react P''$, $P'' \barb m$,
$P'' \react P'$ and $P' \not \barb m$, with
$C'[-] = -|open \hspace{1mm} n.(m[\nil]|open \hspace{1mm} m.T_1)$.
The fact that $C'[P] \react P''$ and $P'' \barb m$ means that
the capability $open \hspace{1mm} n$ has been executed, hence
there must be a unrestricted ambient $n$ at top-level of $P$, i.e.,
$P \equiv (\nu A) (n[P_1]|P_2)$ and $n \not \in A$.
From this follows that $P'' = (\nu A)(P_1|P_2)|m[\nil]|open \hspace{1 mm} m.T_1$,
and since $P' \not \barb m$, then $P' \equiv (\nu A) (P_1|P_2)|T_1$.
Moreover, by knowing that $P = (\nu A) (n[P_1]|P_2)$ and $n \not \in A$,
we can conclude that $P \tr{-|open \hspace{1mm} n.T_1}_{M_I} P'$.

Assume that $P \tr{-|open \hspace{1mm} n.T_1} P'$. This means that
$P \equiv Q$, where $Q = (\nu A) (n[P_1]|P_2)$, $n \not \in A $ and
$P'= (\nu A) (P_1|P_2)|T_1$. We consider the context $C'[-] = -|open
\hspace{1mm} n.(m[\nil]|open \hspace{1mm} m.T_1)$ with $m\not \in
fn(P)$. It is easy to note that $C'[Q] \react P''$ s.t. $P'' = (\nu
A) (P_1|P_2)|m[\nil]|open \hspace{1 mm} m.T_1$ and $P'' \barb m$.
Therefore, since $C'[P] \equiv C'[Q]$, we also have that $C'[P]
\react P''$. Now, we can note that $P'' \react P'$ and, since $m$ is
fresh for $P$, $P' \not \barb m$.
\end{proof}

\begin{figure}[!t]
\begin{tabular}{p{3 cm} p{12.1 cm}}
\hline
\\
$\mathcal P^{-|open \hspace{1mm} n.T_1}(X,Y)$ & $\exists P''$ and $m \not \in fn(X)$ s.t. $P'' \barb m,
C'[X] \react P'' \react Y$ and $Y \not \barb m$ \\
& with $C'[-] = -|open \hspace{1mm} n.(m[\nil]|open \hspace{1mm} m.T_1)$\\ \\
\hline
\end{tabular}
\caption{Predicate for the label $-|open \hspace{1mm} n.T_1$.}
\label{PredMA}
\end{figure}

\begin{proposition} \label{LMAStable}
All labels in $L_M$ are stable under $\bsbis_{\scriptscriptstyle{M}}$.
\end{proposition}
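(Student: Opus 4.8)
The plan is to reduce the statement directly to Lemma~\ref{lemma:PredMA}, which already carries all the technical weight. Unfolding the definitions, an arbitrary label $-|open \hspace{1mm} n.T_1 \in L_M$ is stable under $\bsbis_{\scriptscriptstyle{M}}$ (Definition~\ref{def:Lstable}) precisely when the predicate $\mathcal P(X,Y) = X \tr{-|open \hspace{1mm} n.T_1}_{M_I} Y$ is stable under $\bsbis_{\scriptscriptstyle{M}}$ in the sense of Definition~\ref{def:PredStable}. Since every element of $L_M$ has this shape, it suffices to establish stability of this transition predicate for an arbitrary ambient name $n$ and pure process $T_1$.

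First I would invoke the second claim of Lemma~\ref{lemma:PredMA}, namely that $\mathcal P^{-|open \hspace{1mm} n.T_1}(P,P')$ holds if and only if $P \tr{-|open \hspace{1mm} n.T_1}_{M_I} P'$. This says that the auxiliary predicate of Fig.~\ref{PredMA} and the transition predicate $\mathcal P$ above denote the very same relation on processes. Since stability under $\bsbis_{\scriptscriptstyle{M}}$ (Definition~\ref{def:PredStable}) is a property of the relation a predicate defines, and not of its syntactic presentation, the two predicates are stable, or not stable, simultaneously. Then I would invoke the first claim of Lemma~\ref{lemma:PredMA}, that $\mathcal P^{-|open \hspace{1mm} n.T_1}(X,Y)$ is stable under $\bsbis_{\scriptscriptstyle{M}}$. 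Combining the two claims, the transition predicate $X \tr{-|open \hspace{1mm} n.T_1}_{M_I} Y$ is stable under $\bsbis_{\scriptscriptstyle{M}}$, and hence, by Definition~\ref{def:Lstable}, the label $-|open \hspace{1mm} n.T_1$ is stable. As $n$ and $T_1$ were arbitrary, this covers all of $L_M$ and proves the proposition.

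In this decomposition there is essentially no obstacle left at the level of the proposition: the real difficulty has been isolated into Lemma~\ref{lemma:PredMA}, and in particular into the design of the auxiliary predicate $\mathcal P^{-|open \hspace{1mm} n.T_1}$. The point is that stability of $X \tr{-|open \hspace{1mm} n.T_1}_{M_I} Y$ cannot be argued directly from the bisimulation game of $\bsbis_{\scriptscriptstyle{M}}$, whose clauses only speak about reductions $\react$ and barbs $\barb{\cdot}$, not about labelled IPO transitions. The auxiliary predicate circumvents this by re-expressing the $open \hspace{1mm} n$ transition purely through reductions and a fresh barb $m$, using the context $C'[-] = -|open \hspace{1mm} n.(m[\nil]|open \hspace{1mm} m.T_1)$: the two-step reduction $C'[X] \react P'' \react Y$ together with the barb flips $P'' \barb m$ and $Y \not \barb m$ forces exactly the consumption of an unrestricted top-level ambient $n$. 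It is this faithful encoding that makes the reaction-and-barb clauses of $\bsbis_{\scriptscriptstyle{M}}$ applicable and thereby yields stability; once it is in place, the present proposition follows as a formality.
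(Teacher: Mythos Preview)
Your proposal is correct and matches the paper's approach exactly: the paper states that the proposition ``trivially follows from Lemma~\ref{lemma:PredMA}'', and you have simply spelled out how the two claims of that lemma combine via Definitions~\ref{def:PredStable} and~\ref{def:Lstable} to give stability of each label in $L_M$. Your additional commentary on why the auxiliary predicate is needed is accurate and helpful, but goes beyond what the paper records for this proposition itself.
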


The proof of the proposition above trivially follows from Lemma \ref{lemma:PredMA}.

We finally introduce the main characterization proposition.

\begin{proposition}
$\bsbis_{\scriptscriptstyle{M}} = \sim^{L_M}$.
\end{proposition}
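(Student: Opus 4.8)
The plan is to obtain the equality as a direct instance of Proposition~\ref{Prop:BSB=LB}, applied to the reactive system underlying the $\ilts$ $M_I$ for MAs, with the family of barbs $O_{\scriptscriptstyle{M}}$ and the set of labels $L_M$. That proposition states that for an RS with redex RPOs, a set $O$ of contextual barbs, and a set $L$ of labels, if $L$ is $O$-capturing and each label of $L$ is stable under $\bsbis$, then $\bsbis$ coincides with $\bisl$. So the whole argument reduces to checking the four hypotheses in the MA case, after which the conclusion with $L = L_M$ is exactly $\bsbis_{\scriptscriptstyle{M}} = \sim^{L_M}$.

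First I would record that the three calculus-specific hypotheses are already in place. The family $O_{\scriptscriptstyle{M}}$ consists of contextual barbs by Proposition~\ref{MAsCB}; the set $L_M$ of labels of the shape $-|open\;n.T_1$ is $O_{\scriptscriptstyle{M}}$-capturing by Proposition~\ref{LMAO-capturing}; and every label in $L_M$ is stable under $\bsbis_{\scriptscriptstyle{M}}$ by Proposition~\ref{LMAStable}. The latter is itself a consequence of Lemma~\ref{lemma:PredMA}, where the auxiliary predicate $\mathcal P^{-|open\;n.T_1}$ of Fig.~\ref{PredMA} is shown both to coincide with the transition predicate $X \tr{-|open\;n.T_1}_{M_I} Y$ and to be stable under $\bsbis_{\scriptscriptstyle{M}}$.

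The one remaining side condition is that the RS underlying $M_I$ has redex RPOs. Here I would appeal to the fact, recalled in Section~\ref{sec:MA}, that the LTS $M$ is distilled from a borrowed-context LTS over graphs, and that the borrowed-context construction is an instance of the RS theory that guarantees redex RPOs~\cite{EK:06,graphslics}; hence the hypothesis holds. With all four hypotheses discharged, Proposition~\ref{Prop:BSB=LB} yields $\bsbis_{\scriptscriptstyle{M}} = \sim^{L_M}$ at once. Combined with Proposition~\ref{RBC=BSB}, this also delivers the advertised conclusion that $\sim^{L_M}$ captures reduction barbed congruence $\sim^{MA}$ over MAs.

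The proof is therefore essentially a bookkeeping assembly of previously established facts, and I do not expect a genuine obstacle in the main argument. The only point needing a little care is the redex-RPO condition, since it is the single hypothesis of Proposition~\ref{Prop:BSB=LB} that is not packaged as an explicit proposition above; the real mathematical content of the characterization lives upstream, in the stability argument of Lemma~\ref{lemma:PredMA}, where the barb-based encoding of the $open\;n$ transition through a fresh ambient $m$ is what makes the label $-|open\;n.T_1$ observable via $\bsbis_{\scriptscriptstyle{M}}$.
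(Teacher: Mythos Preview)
Your proposal is correct and follows exactly the paper's own proof: invoke Proposition~\ref{Prop:BSB=LB} after checking its hypotheses via Propositions~\ref{MAsCB}, \ref{LMAO-capturing}, and~\ref{LMAStable}. You are in fact slightly more careful than the paper, which silently takes the redex-RPO hypothesis for granted while you explicitly discharge it via the borrowed-context construction.
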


\begin{proof}
  First of all, by Proposition \ref{MAsCB}, we know that MAs barbs are
  contextual.  Moreover, by Propositions \ref{LMAO-capturing} and
  \ref{LMAStable}, we know that $L$ is
  $O_{\scriptscriptstyle{M}}$-capturing and it contains only labels
  that are stable under $\bsbis_{\scriptscriptstyle{M}}$.  Therefore,
  thanks to Proposition \ref{Prop:BSB=LB}, we can conclude that
  $\bsbis_{\scriptscriptstyle{M}}= \sim^{L_M}$.
\end{proof}

The $L$-bisimilarity $\sim^{L_{M}}$ presented above is not the only
one which is able to characterize barbed saturated bisimilarity
$\bsbis_{\scriptscriptstyle{M}}$. For example, as said before, we
can choose to consider all labels of the shape $-|m[in \hspace{1mm}
n.T_1|T_2]$: besides being able to capture MAs barbs, they are
also stable under $\bsbis_{\scriptscriptstyle{M}}$. However,
generally, we can consider the sets $L$ containing at least all
the labels of the shape $-|open \hspace{1mm} n.T_1$ or $-|m[in
\hspace{1mm} n.T_1|T_2]$ to capture barbs, and other labels of $M_I$
that are stable under $\bsbis_{\scriptscriptstyle{M}}$, i.e., labels
such that it is possible to define a predicate analogous to the one
we defined for the labels $-|open \hspace{1mm} n.T_1$.


\section{$L$-Bisimilarity for (Asynchronous) CCS} \label{sec:L-BisCCS}

Section \ref{sec:CCS} has shown that IPO-bisimilarity coincides with
the ordinary bisimilarity of CCS ($\sim^{CCS}$), while it is
strictly contained in asynchronous bisimilarity. In this section, we
first show that $L$-bisimilarity generalizes both cases and then we
prove that these also coincide with their barbed saturated
bisimilarities.

\paragraph{$L$-Bisimilarity for Asynchronous CCS.}
In asynchronous bisimulation (Definition \ref{def:Abis}),
transitions labelled with $\tau$ and $\bar{a}$ (corresponding to $-$
and $-|a.T_1$ in $A_I$, respectively) must be matched by transitions
with the same labels. Moreover, when $P \tr{a}P'$ (corresponding to
$P \tr{-|\bar{a}}P'$ in $A_I$) then either $Q\tr{a}Q'$ and $P'\,
\mathcal R\,Q'$ or $Q\tr{\tau}Q'$ and $P'\, \mathcal R \, Q' |
\bar{a}$. This is equivalent to require that $Q | \bar{a} \react Q'$
and $P'\, \mathcal R \, Q'$.
Thus, in order to characterize $\sim^A$ as $L$-bisimilarity, it
suffices to choose as $L$ the set of labels corresponding to $\tau$
and $\bar{a}$.


\begin{proposition}
  Let $L_{A}$ be the set containing the labels of the ITS $A_I$ of the
  shape $-$ and $- |a.T_1$, for $a$ channel name and $T_1$ pure
  process.  Then, $\sim^{L_{A}}=\sim^{A}$.
\end{proposition}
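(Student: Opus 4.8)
The plan is to show $\sim^{L_A} = \sim^A$ by establishing the two inclusions separately, working through the correspondence between the ordinary ACCS labels and the ITS $A_I$ labels recalled in Section~\ref{sec:CCS}: namely $P \tr{\tau} Q$ iff $P \tr{-} Q$, that $P \tr{\bar{a}} Q$ iff $P \tr{-|a.X_1} Q|X_1$, and that $P \tr{a} Q$ iff $P \tr{-|\bar{a}} Q$. Let me write $L_A = \{ -, -|a.T_1 \}$ (for all channel names $a$ and pure $T_1$). First I would unfold Definition~\ref{def:L1bis} for this particular $L_A$: an $L_A$-bisimulation must match a $\tr{-}$ or $\tr{-|a.T_1}$ move by an IPO transition with the \emph{same} label, and match every other IPO transition (crucially the receive transitions $\tr{-|\bar{a}}$, which lie outside $L_A$) only by the semi-saturated condition, i.e.\ by a reaction $\bar{a}[Q] = Q|\bar{a} \react Q'$. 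The heart of the argument is to check that this semi-saturated matching condition for the label $-|\bar{a}$ is \emph{exactly} the asymmetric clause ``either $Q \tr{a} Q'$ or $Q \tr{\tau} Q'|\bar{a}$'' of Definition~\ref{def:Abis}, as already foreshadowed in the prose: $Q|\bar{a} \react Q'$ holds iff either $Q$ performs a receive on $a$ consuming the message (yielding $Q \tr{a} Q'$) or $Q$ performs an autonomous $\tau$ leaving the message untouched (yielding $Q' = Q''|\bar{a}$ with $Q \tr{\tau} Q''$).

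For the inclusion $\sim^A \subseteq \sim^{L_A}$, I would take the relation $\mathcal{R} = {\sim^A}$ and verify it is an $L_A$-bisimulation. Given $P \sim^A Q$ and an IPO transition $P \tr{C[-]} P'$, I split on the form of $C[-]$ using the label correspondence. If $C[-] = -$, this is a $\tau$-move $P \tr{\tau} P'$, matched by $Q \tr{\tau} Q'$ with $P' \sim^A Q'$, hence $Q \tr{-} Q'$ as required for a label in $L_A$. If $C[-] = -|a.T_1$, this corresponds to an output $P \tr{\bar{a}} P''$ with $P' = P''|T_1$ (carrying the spawned $X_1$ instantiated to $T_1$); here I match using $Q \tr{\bar{a}} Q''$ and close up by spawning the same $T_1$, checking $P''|T_1 \sim^A Q''|T_1$ by congruence of $\sim^A$ under parallel composition. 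The remaining case $C[-] = -|\bar{a}$ is outside $L_A$, so I only owe a reaction: $P \tr{-|\bar{a}} P'$ means $P \tr{a} P'$, and the asynchronous clause gives either $Q \tr{a} Q'$ (so $Q|\bar{a} \react Q'$) or $Q \tr{\tau} Q''$ with $P' \sim^A Q''|\bar{a}$ (so $Q|\bar{a} \react Q''|\bar{a} =: Q'$); either way $Q|\bar{a} \react Q'$ with $P' \sim^A Q'$, as needed.

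For the reverse inclusion $\sim^{L_A} \subseteq \sim^A$, I would take $\mathcal{R} = {\sim^{L_A}}$ and check it is an asynchronous bisimulation, running the same case analysis backwards through the label correspondence: a $\tau$-move is a $\tr{-}$ move and matched in-kind; an output $\tr{\bar{a}}$ is a $\tr{-|a.T_1}$ move and matched in-kind (erasing the spawned $T_1$, which does not affect the resulting state up to the continuation-free nature of ACCS outputs); and a receive $\tr{a}$, being a $\tr{-|\bar{a}}$ move outside $L_A$, yields $Q|\bar{a} \react Q'$, which I decompose into the two disjuncts of Definition~\ref{def:Abis} exactly as above. The main obstacle, and the step I would treat most carefully, is the precise bookkeeping of the spawned process variable $X_1$ and of structural congruence: I must ensure that matching the label $-|a.T_1$ with the \emph{same} $T_1$ on both sides is legitimate (it is, since $A_I$ instantiates labels and states uniformly per Definition~\ref{instance}) and that the reaction $Q|\bar{a} \react Q'$ genuinely factors into precisely the two ACCS derivations claimed, with no spurious third possibility arising from restriction or choice. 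This decomposition of $Q|\bar{a} \react Q'$ is where I expect the real content to lie, and I would justify it by inspecting the reduction rules of Fig.~\ref{reductionA} together with the structural congruence axioms.
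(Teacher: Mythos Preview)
Your proposal is correct and follows exactly the approach the paper sketches in the paragraph preceding the proposition: match $-$ and $-|a.T_1$ in-kind via the label correspondence with $\tau$ and $\bar a$, and reduce the remaining case $-|\bar a$ to the semi-saturated condition $Q|\bar a \react Q'$, which you correctly identify as equivalent to the disjunctive input clause of Definition~\ref{def:Abis}. The only point worth tightening is your ``erasing the spawned $T_1$'' step in the reverse inclusion: simply instantiate $T_1 = \nil$ (and likewise rely on the known congruence of $\sim^A$ under parallel composition in the forward inclusion), and the argument goes through as written.
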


\paragraph{$L$-Bisimilarity for CCS.} Since IPO-bisimilarity coincides
with $\sim^{CCS}$, in order to characterize $\sim^{CCS}$ as
$L$-bisimilarity, it is enough to include all the IPO-labels into
$L$.


\begin{proposition}
Let $L_{CCS}$ be the set containing all the labels of the ITS $C_I$.
Then, $\sim^{L_{CCS}}=\sim^{CCS}$.
\end{proposition}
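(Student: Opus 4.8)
The plan is to show that $\sim^{L_{CCS}} = \sim^{CCS}$ by exploiting the two facts already established earlier in the excerpt: first, that for CCS the IPO-bisimilarity coincides with the ordinary bisimilarity, i.e.\ $\sim^I = \sim^{CCS}$ (this is the main result of \cite{bgk:bisimulation-graph-enc}, recalled in Section~\ref{sec:CCS}); and second, the general observation (noted right after Definition~\ref{def:L1bis}) that $\bisl$ with $L$ taken to be the whole class of contexts coincides with IPO-bisimilarity $\sim^I$. Since $L_{CCS}$ is \emph{defined} to contain all the labels of the ITS $C_I$, I would argue that for the purposes of the $L$-bisimulation game every IPO-transition of $C_I$ carries a label lying in $L_{CCS}$, so the ``otherwise'' branch of Definition~\ref{def:L1bis} is never invoked and the definition of $L_{CCS}$-bisimulation collapses exactly to that of IPO-bisimulation.

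The key steps, in order, would be as follows. First I would observe that every transition $P \IPOtr{C[-]} P'$ in $C_I$ has a label $C[-]$ of one of the three shapes appearing in the rules \textsc{Tau}, \textsc{Rcv}, \textsc{Snd} of Fig.~\ref{LTSCCS} (namely $-$, $-|\bar{a}.X_1$, or $-|a.X_1$), and that $L_{CCS}$ is precisely the set of all such labels; hence $C[-] \in L_{CCS}$ always holds. Second, I would instantiate Definition~\ref{def:L1bis} with $L = L_{CCS}$: because the side condition $C[-]\in L$ is always satisfied, a symmetric relation $\mathcal R$ is an $L_{CCS}$-bisimulation if and only if whenever $P\,\mathcal R\,Q$ and $P\IPOtr{C[-]}P'$ then $Q\IPOtr{C[-]}Q'$ with $P'\,\mathcal R\,Q'$ --- which is verbatim the definition of an IPO-bisimulation. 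Therefore the two largest such relations coincide, giving $\sim^{L_{CCS}} = \sim^I$. Third, I would invoke $\sim^I = \sim^{CCS}$ from Section~\ref{sec:CCS} to conclude $\sim^{L_{CCS}} = \sim^{CCS}$.

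The only genuinely delicate point --- and the step I would treat most carefully --- is justifying that \emph{every} IPO-label of $C_I$ lies in $L_{CCS}$. The subtlety is that $C_I$ is the instantiated LTS (Definition~\ref{instance}), whose labels are obtained from those of the abstract LTS $C$ by applying substitutions $\sigma$ to the process variable $X_1$. One must check that the definition ``$L_{CCS}$ contains all the labels of the ITS $C_I$'' is read as containing all instantiated labels, so that closure under the relevant substitutions is automatic and no IPO-transition can escape $L_{CCS}$. Once this is pinned down, the argument is essentially a definitional unfolding. I would also note in passing that this Proposition is an immediate specialization of the general remark following Definition~\ref{def:L1bis} that taking $L$ to be the whole class of contexts recovers $\sim^I$; the content here is merely that $L_{CCS}$, though not literally ``all contexts,'' contains all contexts that ever appear as IPO-labels of $C_I$, which suffices.
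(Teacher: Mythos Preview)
Your proposal is correct and follows essentially the same route as the paper: the paper does not give a detailed proof but simply remarks that since IPO-bisimilarity coincides with $\sim^{CCS}$, it suffices to take $L$ to contain all IPO-labels, which is exactly your argument via the collapse of Definition~\ref{def:L1bis} to IPO-bisimulation when every label lies in $L$. Your extra care about the instantiated labels of $C_I$ versus $C$ is a useful elaboration, but the core idea is identical.
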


\paragraph{From $L$-Bisimilarity to Barbed Saturated Bisimilarity.}
It is important to note that the choice of $L_{CCS}$ and $L_A$ is
not arbitrary. Indeed, in both cases, $\sim^{L_{CCS}}$ and
$\sim^{L_A}$ coincide with barbed saturated bisimilarities. This is
not a new result, but it is interesting to see that it can be easily
proved by following the same approach that we have used for MAs in
Section~\ref{sec:LbisMA}.

For the synchronous case, barbs are defined as $P\downarrow a$ if and
only if
$P\tr{a}Q$ and $P\downarrow \bar{a}$ if and only if $P\tr{\bar{a}}Q$. Since
$L_{CCS}$ contains the labels $-|\bar{a}.T_1$ and $-|a.T_1$
(corresponding to $a$ and $\bar{a}$ in the ordinary LTS), then
$L_{CCS}$ is barb capturing.

It is also easy to see that the barbs are contextual. Then, in order
to use Proposition \ref{Prop:BSB=LB}, we only have to prove that all
the labels in $L_{CCS}$ are stable under barbed congruence.
Analogously to MAs, we define some additional predicates. These are
shown in Fig. \ref{fig:arrayCCS}. It is easy to see that for each
label $C[-]$, $X\tr{C[-]}Y$ in $C_I$ if and only if $\mathcal
P^{C[-]}(X,Y)$. It is also easy to show that all of them are stable
under $\bsbis$.

\begin{figure}
\begin{tabular}{p{3 cm} p{12.1 cm}}
\hline
\\
  $ \mathcal P^{-|\bar{a}.T_1}(X,Y)$ & $\exists P'$ and $i\notin fn(X)$  s.t.
  $P'\downarrow \bar{i}$ and $X|\bar{a}.(\bar{i}|T_1) | i \react
  P' \react Y$ \\
  $\mathcal P^{-|a.T_1}(X,Y)$ & $\exists P'$ and $i\notin fn(X)$  s.t.
  $P'\downarrow \bar{i}$ and $X|a.(\bar{i}|T_1) | i \react
  P' \react Y$ \\
  $\mathcal P^-(X,Y)$ & $X \react Y$ \\ \\
\hline
\end{tabular}
\caption{Predicates for CCS}\label{fig:arrayCCS}
\end{figure}

For the asynchronous case, recall that $L_A$ only contains labels of
the form $-$ and $-|a.T_1$ (corresponding to labels $\tau$ and
$\bar{a}$ in the ordinary LTS). Since only output barbs
$\downarrow_{\bar{a}}$ are defined, then $L_A$ is barb capturing. In
order to prove that each label in $L_A$ is stable under $\bsbis$ we
can use for $-$ and $-|a.T_1$ the predicates that we have used in
the synchronous case (Fig.~\ref{fig:arrayCCS}).

It is worth noting that labels of the form $-|\bar{a}$ are not
stable under $\bsbis$. Indeed, we cannot adopt the predicate used in
the synchronous case (the first in Fig.~\ref{fig:arrayCCS}), since
outputs have no continuation in ACCS.


\section{Conclusions and future work}
\label{sec:Concl}

The paper introduces a novel behavioural equivalence for RSs, namely,
$L$-bisimulation: a flexible tool, parametric with respect to a set of
labels $L$. The associated bisimilarity is proved to be a
congruence, and it is shown to be intermediate between the standard
IPO and saturated semantics for RSs: indeed, it is able to recover
both of them, by simply varying the set of labels $L$. More
importantly, also the more expressive barbed saturated semantics can
be recast, as long as the set $L$ satisfies suitable conditions.

As for any newly proposed semantics, its expressiveness and ease of
use have to be tested against suitable case studies. We thus
considered a recently introduced IPO transition system for
MAs, and two other IPO transition systems for CCS and its asynchronous
variant. We show that in all those cases, for a right choice of $L$,
$L$-bisimilarity precisely captures the standard semantics for the calculus at hand.

We can foresee three immediate extensions of our work. First of all,
we would like to precisely understand the notion of IPO-closedness,
which is required for the set of labels $L$, in order for
$L$-bisimilarity to be a congruence. It would be important to
establish suitable and more manageable conditions under which a set of
arrows of a given category satisfies that property, especially for
those RSs where IPOs have an inductive presentation (such as for those
induced by the borrowed context mechanism).

Moreover, we would like to further elaborate on the connection between
$L$-bisimilarity and barbed semantics, moving beyond the preliminary
results presented in Section~\ref{L=BS}. As a start, in order to
establish conditions ensuring that barbs satisfy the pivotal property
of being contextual; and, more to the point, for checking whenever a
set of labels is barb capturing and contains only labels stable under
barbed saturated bisimilarity.
As far as the specific MAs case study
is concerned, most of the IPO labels occurring in our transition
system are indeed stable, i.e., the relative labelled transitions
can be characterized by a
predicate which is stable under the barbed saturated bisimilarity.
The only labels that are not stable are the ones
of the shape $-|m[P]$ and $m[-|P]$ of the rule \textsc{InAmb}
and \textsc{OutAmb}, respectively.
It seems intriguing that those same labels required the
introduction of so-called Honda-Tokoro inference rules
in~\cite{PawelAmbient} for capturing the reduction barbed congruence
by means of standard bisimilarity.

Finally, we remark that so far in our methodology the choice of the
``right'' set $L$, as well as the identification of a meaningful set
of barbs, is left to the ingenuity of the researcher. We would like to
devise a general theory that relying only on the syntax of the
calculus at hand and on the associated reduction semantics might allow
to automatically derive either a suitable family of barbs or some kind
of basic set of observations, along the lines of the proposals
in~\cite{HY,PawBarbs,RathkeS09}.

\paragraph{\textbf Acknowledgements.}
We are indebted to the anonymous referees for their useful remarks, which helped
us in improving the overall presentation of the paper.

\bibliographystyle{eptcs}
\bibliography{biblio}
\end{document}